\documentclass[pra,superscriptaddress,nopacs,twocolumn]{revtex4-1}
\usepackage[utf8x]{inputenc}

\usepackage{xcolor}
\definecolor{darkred}{rgb}{0.8,0.1,0.1}
 \usepackage{amsmath,mathtools}
\usepackage{enumerate}
\usepackage{graphicx,epsfig,amsmath,amssymb,verbatim,color}
\usepackage{dsfont}
\usepackage{times}

\usepackage{hyperref}
\hypersetup{colorlinks=true,citecolor=blue,linkcolor=blue,filecolor=blue,urlcolor=blue,breaklinks=true}
\usepackage{cleveref}
\usepackage{graphicx}

\usepackage{url}
\newtheorem{definition}{Definition}
\newtheorem{proposition}{Proposition}

\newtheorem{theorem}[proposition]{Theorem}

\newtheorem{corollary}[proposition]{Corollary}

\def\squareforqed{\hbox{\rlap{$\sqcap$}$\sqcup$}}
\def\qed{\ifmmode\squareforqed\else{\unskip\nobreak\hfil
\penalty50\hskip1em\null\nobreak\hfil\squareforqed
\parfillskip=0pt\finalhyphendemerits=0\endgraf}\fi}
\def\endenv{\ifmmode\;\else{\unskip\nobreak\hfil
\penalty50\hskip1em\null\nobreak\hfil\;
\parfillskip=0pt\finalhyphendemerits=0\endgraf}\fi}
\newenvironment{proof}{\noindent \textbf{{Proof.}~}}{\hfill $\blacksquare$}

\newcommand{\nc}{\newcommand}
\nc{\rnc}{\renewcommand}
\nc{\beg}{\begin{equation}}
\nc{\eeq}{{\end{equation}}}
\nc{\beqa}{\begin{eqnarray}}
\nc{\eeqa}{\end{eqnarray}}
\nc{\lbar}[1]{\overline{#1}}
\nc{\bra}[1]{\langle#1|}
\nc{\ket}[1]{|#1\rangle}
\nc{\ketbra}[2]{|#1\rangle\!\langle#2|}
\nc{\braket}[2]{\langle#1|#2\rangle}

\nc{\proj}[1]{| #1\rangle\!\langle #1 |}
\nc{\avg}[1]{\langle#1\rangle}
\nc{\Rank}{\operatorname{Rank}}
\nc{\smfrac}[2]{\mbox{$\frac{#1}{#2}$}}
\nc{\tr}{\operatorname{Tr}}
\nc{\ox}{\otimes}
\nc{\dg}{\dagger}
\nc{\dn}{\downarrow}
\nc{\cA}{{\cal A}}
\nc{\cB}{{\cal B}}
\nc{\cC}{{\cal C}}
\nc{\cD}{{\cal D}}
\nc{\cE}{{\cal E}}
\nc{\cF}{{\cal F}}
\nc{\cG}{{\cal G}}
\nc{\cH}{{\cal H}}
\nc{\cI}{{\cal I}}
\nc{\cJ}{{\cal J}}
\nc{\cK}{{\cal K}}
\nc{\cL}{{\cal L}}
\nc{\cM}{{\cal M}}
\nc{\cN}{{\cal N}}
\nc{\cO}{{\cal O}}
\nc{\cP}{{\cal P}}
\nc{\cQ}{{\cal Q}}
\nc{\cR}{{\cal R}}
\nc{\cS}{{\cal S}}
\nc{\cT}{{\cal T}}
\nc{\cV}{{\cal V}}
\nc{\cX}{{\cal X}}
\nc{\cY}{{\cal Y}}
\nc{\cZ}{{\cal Z}}
\nc{\cW}{{\cal W}}
\nc{\csupp}{{\operatorname{csupp}}}
\nc{\qsupp}{{\operatorname{qsupp}}}
\nc{\var}{{\operatorname{var}}}
\nc{\rar}{\rightarrow}
\nc{\lrar}{\longrightarrow}
\nc{\polylog}{{\operatorname{polylog}}}
\nc{\1}{{\mathds{1}}}
\nc{\wt}{{\operatorname{wt}}}
\nc{\av}[1]{{\left\langle {#1} \right\rangle}}
\nc{\supp}{{\operatorname{supp}}}

\nc{\dia}{{\diamondsuit }}

\def\ve{\varepsilon}

\nc{\SSS}{{{\mathbb S}}}
\nc{\RR}{{{\mathbb R}}}
\nc{\CC}{{{\mathbb C}}}
\nc{\FF}{{{\mathbb F}}}
\nc{\NN}{{{\mathbb N}}}
\nc{\ZZ}{{{\mathbb Z}}}
\nc{\PP}{{{\mathbb P}}}
\nc{\QQ}{{{\mathbb Q}}}
\nc{\UU}{{{\mathbb U}}}
\nc{\EE}{{{\mathbb E}}}
\nc{\id}{{\operatorname{id}}}

\nc{\CHSH}{{\operatorname{CHSH}}}

\nc{\be}{\begin{equation}}
\nc{\ee}{{\end{equation}}}
\nc{\bea}{\begin{eqnarray}}
\nc{\eea}{\end{eqnarray}}
\nc{\<}{\langle}
\rnc{\>}{\rangle}
\nc{\Hom}[2]{\mbox{Hom}(\CC^{#1},\CC^{#2})}
\nc{\rU}{\mbox{U}}

\nc{\ob}[1]{#1}

\nc{\SEP}{{\text{SEP}}}
\nc{\NS}{{\text{NS}}}
\nc{\LOCC}{{\operatorname{LOCC}}}
\nc{\PPT}{{\operatorname{PPT}}}
\nc{\EXT}{{\text{EXT}}}
\nc{\Sym}{{\operatorname{Sym}}}

\nc{\HH}{\mathbb{H}}


\nc{\ERLO}{{E_{\text{r,LO}}}}
\nc{\ERLOCC}{{E_{\text{r,LOCC}}}}
\nc{\ERPPT}{{E_{\text{r,PPT}}}}
\nc{\ERLOCCinfty}{{E^{\infty}_{\text{r,LOCC}}}}
 
\nc{\sech}{\rm{sech}}

\rnc{\bar}{\;\rule{0pt}{9.5pt}\right|\;}

\nc{\lset}{\left\{\left.}
\nc{\rset}{\right\}}

\nc{\lsetr}{\left\{}
\nc{\rsetr}{\right.\right\}}
\nc{\barr}{\left|\rule{0pt}{9.5pt}\;}

\let\id\1

\nc{\norm}[2]{\left\lVert#1\right\rVert_{#2\!}}
\nc{\lnorm}[2]{\left\lVert#1\right\rVert_{\ell_{#2}}}

\nc{\STAB}{{{\operatorname{STAB}}}}
 \nc{\bu}{{{\textbf{u}}}}
  \nc{\sn}{{{\operatorname{sn}}}}



\begin{document}
 \title{Efficiently computable bounds for magic-state distillation}
 \author{Xin Wang}
\email{wangxin73@baidu.com}
\affiliation{Institute for Quantum Computing, Baidu Research, Beijing 100193, China}
\affiliation{Joint Center for Quantum Information and Computer Science, University of Maryland, College Park, Maryland 20742, USA}

 \author{Mark M. Wilde}
\email{mwilde@lsu.edu}
\affiliation{Hearne Institute for Theoretical Physics, Department of Physics and Astronomy,
Center for Computation and Technology, Louisiana State University, Baton Rouge, Louisiana 70803, USA}

 \author{Yuan Su}
 \email{buptsuyuan@gmail.com}
\affiliation{Joint Center for Quantum Information and Computer Science, University of Maryland, College Park, Maryland 20742, USA}
 \affiliation{Department of Computer Science and Institute for Advanced Computer Studies, University of Maryland, College Park, Maryland 20742, USA}

\date{\today}

\begin{abstract}
Magic-state distillation (or non-stabilizer state manipulation) is a crucial component in the leading approaches to realizing scalable, fault-tolerant, and universal quantum computation. Related to non-stabilizer state manipulation is the resource theory of non-stabilizer states, for which one of the goals is to characterize and quantify non-stabilizerness of a quantum state.
In this paper, we introduce the family of \textit{thauma} measures to quantify the amount of non-stabilizerness in a quantum state, and we exploit this family of measures to address several open questions in the resource theory of non-stabilizer states. 
As a first application, we establish the \textit{hypothesis testing thauma} as an efficiently computable benchmark for the one-shot distillable non-stabilizerness, which in turn leads to a variety of bounds on the rate at which non-stabilizerness can be distilled, as well as on the overhead of magic-state distillation.
We then prove that the \textit{max-thauma} can be used as an efficiently computable tool in benchmarking  the efficiency of magic-state distillation and that it can outperform pervious approaches based on mana.
Finally, we use the \textit{min-thauma} to bound a quantity known in the literature as the ``regularized relative entropy of magic.'' As a consequence of this bound, we find that two classes of states with maximal \textit{mana},  a previously established non-stabilizerness measure, cannot be interconverted in the asymptotic regime at a rate equal to one. This result resolves a basic question in the resource theory of non-stabilizer states and reveals a difference between the resource theory of non-stabilizer states and other resource theories such as entanglement and coherence.
\end{abstract}

\maketitle

\textbf{\textit{Introduction.}}---Quantum computers hold the promise of a substantial speed-up over classical computers for certain algebraic problems \cite{Shor1997,Grover1996,Childs2010} and the simulation of quantum systems~\cite{Lloyd1996,Childs2018a}. One  main obstacle to the physical realization of quantum computation is the decoherence that occurs during the execution of quantum algorithms. 
Fault-tolerant quantum computation (FTQC) \cite{Shor1996,Campbell2017c} provides a framework to overcome this difficulty and allows reliable quantum computation when the physical error rate is below a certain threshold value.

According to the Gottesman--Knill theorem \cite{Gottesman1997,Aaronson2004a}, a quantum circuit comprised of only Clifford gates confers no computational advantage because it can be simulated efficiently on a classical computer. 
However, the addition of a  \textit{non-stabilizer state} can lead to a universal gate set via a technique called state injection \cite{Gottesman1999,Zhou2000}, thus achieving universal quantum computation. 
The key of this resolution is to perform \textit{magic-state distillation} \cite{Bravyi2005} (see \cite{Bravyi2012,Jones2013,Haah2017,Campbell2018,Hastings2018,Krishna2018a,Chamberland2018} for recent progress), wherein stabilizer operations are used to transform a large number of noisy non-stabilizer states into a small number of high quality non-stabilizer states. Therefore, a quantitative theory is
highly desirable in order to fully exploit the power of non-stabilizer states in fault-tolerant quantum computation.

Quantum resource theories (QRTs) offer a powerful framework for studying different phenomena in quantum physics, and the seminal ideas of QRTs have recently been influencing diverse areas of physics~\cite{Chitambar2018}.
In the context of the non-stabilizer-state model of universal quantum computation, the resource-theoretic approach reduces to the characterization and quantification of the usefulness of the resourceful non-stabilizer states \cite{Veitch2012,Veitch2014}.
In the framework of \cite{Veitch2012,Veitch2014}, the free operations are the stabilizer operations, those that possess a fault-tolerant implementation in the context of fault-tolerant quantum computation, and the free states are the stabilizer states (STAB). Stabilizer operations include preparation and measurement in the computational basis, and a restricted set of unitary operations, called the Clifford unitaries.
The free states consist of all pure stabilizer states, which are eigenstates of the generalized Pauli operators, and their convex mixtures.
The resource states, namely, the non-stabilizer states, are key resources that are required to achieve some desired computational tasks.
For quantum computers acting on qudit registers with
odd dimension $d$, the resource theory of non-stabilizer states (or equivalently contextuality with respect to stabilizer measurements \cite{Howard2014,Delfosse2015}) has been developed \cite{Veitch2012,Mari2012,Veitch2014}. The resource theory of non-stabilizer states for multiqubit systems was recently developed in \cite{Howard2016,Bravyi2016,Heinrich2018}.

In this paper, we solve some fundamental open questions in the resource theory of non-stabilizer states, and we develop the framework for one-shot magic state distillation. Our main tool for doing so is the \textit{thauma} family of non-stabilizer monotones, which quantify the amount of non-stabilizerness in a given state by comparing it to a positive semi-definite operator with non-positive mana (i.e., a subnormalized state with no non-stabilizerness). Our first contribution is to introduce the one-shot distillable non-stabilizerness of a quantum state and an upper bound for it named \textit{hypothesis testing thauma}. This result leads to various applications for magic-state distillation, which can be interpreted as fundamental limits. The \textit{max-thauma} is another member of the thauma family, and we prove that it is an efficiently computable non-stabilizerness monotone, which can in turn be used to evaluate the efficiency of magic-state distillation. We further provide an example to demonstrate that max-thauma outperforms mana in benchmarking the efficiency of magic-state distillation. We also prove that the \textit{min-thauma} is an additive lower bound on the ``regularized relative entropy of magic,'' the latter quantity defined in \cite{Veitch2014}. This bound then leads to the conclusion that two magic states with maximal negativity cannot be interconverted asymptotically at a rate equal to one.

\textbf{\textit{Discrete Wigner function.}}---We now recall the definition of the discrete Wigner function \cite{Wootters1987,Gross2006,Gross2007}, which is an essential tool in the analysis of the resource theory of non-stabilizer states.
Throughout this paper, a Hilbert space implicitly has an odd dimension, and if the dimension is not prime, it should be understood to be a tensor product of Hilbert spaces each having odd prime dimension.

Let $\cH_d$ denote a Hilbert space of dimension $d$, and let $\{\ket j\}_{j=0,\ldots,d-1}$ denote the standard computational basis. For a prime number~$d$, we define the respective shift and boost  operators $X,Z\in\cL(\cH_d)$ as
$X\ket j = \ket{j\oplus 1}$ and
$Z\ket j =\omega^j \ket j$, with $\omega=e^{2\pi i /d}$.
We define the Heisenberg--Weyl operators as
$T_{\bu}= \tau^{-a_1a_2}Z^{a_1}X^{a_2}$,
where $\tau=e^{(d+1)\pi i/d}$ and $\bu=(a_1,a_2)\in \ZZ_d\times \ZZ_d$.

 For each point $\bu$ in the discrete phase space, there is a corresponding operator $A_\bu$, and the value of the discrete Wigner representation of a quantum state $\rho$ at this point is given by
$
W_{\rho}(\bu):=  \tr A_\bu\rho/d, 
$
where $\{A_\bu\}_\bu$ are the phase-space point operators:
$
A_\bu:=T_\bu A_0 T_\bu^\dag$, $ A_0:=\frac1d \sum_\bu T_\bu$.
We give more details of this formalism in Appendix~A \footnote{See Supplemental Material [url] for detailed mathematical developments of the assertions in the main text. The Supplemental Material includes Refs.~\cite{PhysRevA.71.042315,Ferrie2011,Fawzi2017a,Tomamichel2009,Mosonyi2011,Uhlmann1976,FL13,Beigi2013,Wilde2018a}}.

\textbf{\textit{Thauma.}}---It is well known that quantum computations are classically simulable if they consist of stabilizer operations acting on quantum states with a positive discrete Wigner function. Such states are thus useless for magic-state distillation \cite{Veitch2012}.
Let $\cW_+$ denote the set of quantum states with positive discrete Wigner function.
States in $\cW_+$ can be understood as being analogous to states with a positive partial transpose  in entanglement distillation \cite{Peres1996,Horodecki1998}, in the sense that such states are undistillable. 

To address open questions in the resource theory of non-stabilizer states, we are motivated by the idea of the Rains bound from entanglement theory \cite{Rains2001}, as well as its variants~\cite{Wang2016,Wang2017e,Fang2017}, which also have fruitful applications in quantum communication~\cite{Tomamichel2015a,Tomamichel2016,Wang2016a,Wang2017d}. As developed in \cite{Rains2001} and the later work \cite{Audenaert2002}, the Rains bound and its variants consider sub-normalized states with non-positive logarithmic negativity~\cite{Vidal2002,Plenio2005b} as useless resources, and they use the divergence between the given state and such sub-normalized states to evaluate the behavior of entanglement distillation.

Thus, inspired by the main idea behind the Rains bound, we 
introduce the set of sub-normalized states with non-positive mana:
$
	\cW:=\{\sigma:  \cM(\sigma)\leq 0,\  \sigma \ge 0\},
$ with the mana $\cM(\rho)$ of a quantum state $\rho$ defined as \cite{Veitch2014}
\begin{align}
	\cM(\rho):=\log_2 \Vert \rho \Vert_{W,1},
	\notag
\end{align}
where the Wigner trace norm of an operator $V$ is defined as $\Vert V \Vert_{W,1} := \sum_{\bu} |W_{V}(\bu)|$.
It follows from definitions that $\tr \sigma \leq 1$ if $\sigma \in \cW$.
Note that the mana \cite{Veitch2014} is analogous to the logarithmic negativity \cite{Vidal2002,Plenio2005b}.
Furthermore, the following strict inclusions hold: $\text{STAB} \subsetneq \cW_+ \subsetneq \cW$.

We now define the \textit{thauma} \footnote{Greek for ``wonder'' or ``marvel''} of a state $\rho$ as 
\begin{align}
	\theta(\rho):=\min_{\sigma\in \cW} D(\rho \| \sigma),
	\notag
	\end{align}
where $D(\rho \| \sigma)$ is the quantum relative entropy \cite{Umegaki1962}, defined as $D(\rho \| \sigma)=\tr\{\rho[\log_2 \rho - \log_2 \sigma]\}$ when the support of $\rho$ is contained in the support of $\sigma$ and equal to $+\infty$ otherwise. The thauma can be understood as the minimum relative entropy between a quantum state and the set of subnormalized states with non-positive mana. The thauma is a non-stabilizerness measure that can be efficiently computed via convex optimization (see Appendix~B \cite{Note1}).
Following from the definition of thauma above, we define the regularized thauma of a state $\rho$  as 
$	\theta^{\infty}(\rho):=\lim_{n\to \infty}\theta(\rho^{\ox n})/n$.

The definition of thauma given above can be generalized to a whole family of thauma measures of non-stabilizerness. Defining a generalized divergence $\mathbf{D}(\rho\|\sigma)$ to be any function of a quantum state $\rho$ and a positive semi-definite operator $\sigma$ that obeys data processing \cite{Polyanskiy2010b,SW12}, i.e., $\mathbf{D}(\rho\|\sigma) \geq \mathbf{D}(\cN(\rho)\|\cN(\sigma))$ where $\cN$ is a quantum channel, we arrive at the generalized thauma of a quantum state $\rho$:
\begin{align}
	\boldsymbol{\theta}(\rho):=\inf_{\sigma\in \cW} \mathbf{D}(\rho \| \sigma).
	\notag
\end{align}
If the generalized divergence $\mathbf{D}$ is non-negative for a state $\rho$ and a sub-normalized state $\sigma$ and equal to zero if $\rho = \sigma$, then it trivially follows that the generalized thauma $\boldsymbol{\theta}(\rho)$ is a non-stabilizerness monotone, meaning that it is non-increasing under the free operations and equal to zero for stabilizer states. Examples of generalized divergences, in addition to the relative entropy, include the Petz--R\'enyi relative entropies \cite{P86} and the sandwiched R\'enyi relative entropies \cite{Muller-Lennert2013,Wilde2014a}. See Appendix~C for further details \cite{Note1}.

\textbf{\textit{Min- and max-thauma.}}---In what follows, we make use of the Petz--R\'enyi relative entropy of order zero \cite{P86} and the max-relative entropy \cite{Datta2009} to define the min-thauma and the max-thauma, respectively. As we prove in what follows, these two members of the thauma family are efficiently computable by semidefinite programs (SDPs) \cite{Boyd2004} and are particularly useful for addressing open questions in the resource theory of non-stabilizer states.

The min-thauma of a state $\rho$ is defined as 
	\begin{align}
	\theta_{\min}(\rho)  :=\min_{\sigma\in \cW} D_0(\rho \| \sigma)
	:= \min_{\sigma\in \cW} [-\log_2  \tr P_\rho\sigma],
	\notag
	\end{align}
	where $P_\rho$ denotes the projection onto the support of $\rho$. Note that $\theta_{\min}(\rho)$ is an SDP and the duality theory of SDPs \cite{Boyd2004} leads to  the dual SDP:
\begin{align}
\theta_{\min}(\rho)=-\log_2 \min\{\Vert Q \Vert_{W,\infty}: Q\ge P_\rho \},
\notag
\end{align}
where $\Vert V \Vert_{W,\infty} := d \max_{\bu} |W_{V}(\bu)|$ denotes the Wigner spectral norm of an operator $V$ acting on a space of dimension~$d$.
For any pure state $\ket \psi$,
\begin{align}
\theta_{\min}(\psi)=-\log_2\max_{\sigma\in\cW} F(\psi,\sigma)\le -\log_2 F_{\text{Stab}}(\psi).
\notag
\end{align} 
where $F_{\text{Stab}}(\psi)$ is the stabilizer fidelity \cite{Bravyi2018}.

The max-thauma of a state $\rho$ is defined as 
	\begin{align}
	\theta_{\max}(\rho) & :=\min_{\sigma\in \cW} D_{\max}(\rho \| \sigma) 
	 := \min_{\sigma\in \cW} \left[\min \{\lambda: \rho \leq 2^{\lambda} \sigma\}\right] \notag \\
	&=\log_2 \min\left\{ \Vert V \Vert_{W,1} 
	:  \rho \le V \right\}. \notag
	\end{align}

As the following proposition states, the min- and max-thauma are additive non-stabilizerness measures. Additionally, the min-thauma is a lower bound for the regularized thauma, and the max-thauma is an upper bound. 

\begin{proposition}
\label{prop: min theta add}
For states $\rho$ and $\tau$, it holds that 
\begin{align}
\theta_{\min}(\rho\ox\tau) & =\theta_{\min}(\rho)+\theta_{\min}(\tau),\notag \\
\theta_{\max}(\rho\ox\tau) & =\theta_{\max}(\rho)+\theta_{\max}(\tau) \notag.
\end{align}
Consequently,
$\theta_{\min}(\rho) \leq \theta^{\infty}(\rho)\leq \theta_{\max}(\rho) $.
\end{proposition}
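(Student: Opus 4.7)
The plan is to prove the additivity of $\theta_{\min}$ and $\theta_{\max}$ by establishing matching $\leq$ and $\geq$ inequalities separately, and then to deduce the sandwich $\theta_{\min}(\rho) \leq \theta^\infty(\rho) \leq \theta_{\max}(\rho)$ from the standard ordering of R\'enyi relative entropies combined with the just-established additivity.

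For the subadditivity direction ($\leq$), I would pick optimizers $\sigma_\rho, \sigma_\tau \in \cW$ for $\rho$ and $\tau$ respectively and test $\sigma_\rho \otimes \sigma_\tau$ as a feasible ansatz for $\rho \otimes \tau$. The crucial fact is that the phase-space point operators factorize across tensor products, $A_{(\bu_1, \bu_2)} = A_{\bu_1} \otimes A_{\bu_2}$, so $W_{V_1 \otimes V_2}(\bu_1, \bu_2) = W_{V_1}(\bu_1) W_{V_2}(\bu_2)$, and hence the Wigner trace norm is multiplicative: $\Vert V_1 \otimes V_2 \Vert_{W,1} = \Vert V_1 \Vert_{W,1} \Vert V_2 \Vert_{W,1}$. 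Thus the mana is additive on tensor products and $\sigma_\rho \otimes \sigma_\tau \in \cW$. Combined with the standard additivity of $D_0$ and $D_{\max}$ on tensor-product arguments (using $P_{\rho_1 \otimes \rho_2} = P_{\rho_1} \otimes P_{\rho_2}$ in the $D_0$ case), subadditivity follows at once.

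For the superadditivity direction ($\geq$), I would pass to the dual SDPs. For $\theta_{\min}$ the excerpt already provides $\theta_{\min}(\rho) = -\log_2 \min\{\Vert Q \Vert_{W,\infty} : Q \geq P_\rho\}$. Choosing optimal dual witnesses $Q_\rho, Q_\tau$ and using multiplicativity of the Wigner spectral norm, $\Vert Q_\rho \otimes Q_\tau \Vert_{W,\infty} = \Vert Q_\rho \Vert_{W,\infty} \Vert Q_\tau \Vert_{W,\infty}$, the tensor $Q_\rho \otimes Q_\tau$ is feasible for $P_{\rho \otimes \tau} = P_\rho \otimes P_\tau$, which gives the bound. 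For $\theta_{\max}$ I would first derive the dual SDP $\theta_{\max}(\rho) = \log_2 \max\{\tr(Y\rho) : Y \geq 0,\ \Vert Y \Vert_{W,\infty} \leq 1\}$, which follows from the norm duality $\Vert V \Vert_{W,1} = \max_{\Vert Y \Vert_{W,\infty} \leq 1} \tr(YV)$ induced by the identity $\tr(YV) = d\sum_\bu W_Y(\bu) W_V(\bu)$; strong duality holds by Slater's condition since $V = \rho + \id$ is strictly feasible. Tensoring optimal dual witnesses $Y_\rho \otimes Y_\tau$ then produces a feasible dual witness for $\rho \otimes \tau$, and multiplicativity of $\Vert \cdot \Vert_{W,\infty}$ closes the superadditivity estimate.

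For the final sandwich bound, I would invoke the standard monotonicity $D_0(\rho \Vert \sigma) \leq D(\rho \Vert \sigma) \leq D_{\max}(\rho \Vert \sigma)$ (valid also for subnormalized $\sigma$), which, upon minimizing over $\sigma \in \cW$, yields $\theta_{\min}(\rho) \leq \theta(\rho) \leq \theta_{\max}(\rho)$. Applying this to $\rho^{\otimes n}$ and dividing by $n$, the just-proved additivity keeps the outer quantities at $\theta_{\min}(\rho)$ and $\theta_{\max}(\rho)$, so taking $n \to \infty$ delivers the claim. The main delicate step is the superadditivity of $\theta_{\max}$: since the primal $\min_V \Vert V \Vert_{W,1}$ admits no obvious tensor-product decomposition of an optimal $V$, the argument really rests on producing the correct dual SDP with strong duality, after which the multiplicativity of the Wigner spectral norm carries the rest.
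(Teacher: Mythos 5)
Your proof is correct and follows essentially the same route as the paper: tensor products of primal optimizers (using multiplicativity of the Wigner trace norm, hence closure of $\cW$ under tensor products) for the $\leq$ direction, tensor products of dual witnesses with multiplicativity of the Wigner spectral norm for the $\geq$ direction, and the R\'enyi ordering $D_0 \leq D \leq D_{\max}$ combined with additivity for the sandwich. The only difference is that you explicitly derive and justify (via Slater) the dual SDP $\log_2\max\{\tr(Y\rho): Y\ge 0,\ \Vert Y\Vert_{W,\infty}\le 1\}$ for the max-thauma, a step the paper merely asserts as ``similar'' to the min-thauma case.
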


The proof of Proposition~\ref{prop: min theta add} relies on the Petz--R\'enyi relative entropy of order zero \cite{Datta2009}, the max-relative entropy \cite{Datta2009}, and the duality theory of SDPs \cite{Boyd2004} (see Appendix~D for details \cite{Note1}).

In Appendix~E \cite{Note1}, we prove that the max-thauma possesses a stronger monotonicity property, in the sense that it does not increase on average under stabilizer operations.

We note here that an important consequence of the additivity of min-thauma is that the maximum overlap between $\ket {\phi}^{\ox n}$ and the set $\cW$ is $2^{-n\theta_{\min}(\phi)}$; i.e., for any   $\tau \in \cW \ (\text{or STAB})$, we have that
$
    \tr [ \proj {\phi}^{\ox n} \tau ] \leq 2^{-n\theta_{\min}(\phi)}
$.

\textbf{\textit{Thauma for basic non-stabilizer states.}}---Proposition~\ref{prop: thauma collapse magic states} below states that the min-, regularized, and max-thauma collapse to the same value for several interesting non-stabilizer states, including the Strange, Norrell, $H$, and $T$ state.

The Strange and Norrell states are defined as  \cite{Veitch2014}
 \begin{align}
 	\ket \SSS & :=  (\ket 1-\ket 2)/\sqrt 2, \qquad
	\ket \NN :=  (-\ket 0+2\ket 1-\ket 2)/\sqrt 6. \notag
 \end{align}

The qutrit Hadamard gate is given by \cite{Howard2012}
\begin{align}
H=\frac{1}{\sqrt{3}}
\left(\begin{matrix}
1 &1& 1\\
1 & \omega & \omega^2\\
1 & \omega^2 & \omega 
\end{matrix}
\right),
\label{eq:qutrit-hadamard}
\end{align}
where we recall that $\omega = e^{2 \pi i/3}$.
The $H$ gate has eigenvalues $+1$, $-1$, and $i$, and we label the  three corresponding eigenstates as $\ket {H_+}$, $\ket {H_-}$, and $\ket {H_i}$. The $\ket {H_+}$ state is a non-stabilizer state that is typically considered in the context of magic-state distillation \cite{Bravyi2005,Anwar2012}. In what follows, we refer to it as the $H_+$ non-stabilizer state.

Another common choice for a non-Clifford gate is the $T$ gate. The qutrit $T$ gate is given by
$
T:=\text{diag}(\xi,1,\xi^{-1})
$,
where $\xi=e^{2\pi i/9}$ is a primitive ninth root of unity \cite{Howard2012}.
The non-stabilizer state corresponding  to the qutrit $T$ gate is 
$
\ket T := \frac{1}{\sqrt 3}(\xi  \ket  0+ \ket 1 + \xi^{-1}\ket 2), 
$
which is the state resulting from applying the $T$ gate to the stabilizer state $(  \ket  0+ \ket 1 + \ket 2)/\sqrt 3$.

 In what follows, we employ the shorthand $\SSS \equiv \proj{\SSS} $, $\NN \equiv \proj{\NN} $, $H_+ \equiv \proj{H_+}$, and $T\equiv \proj{T}$. Before stating the theorem, let us recall the definition of the ``regularized relative entropy of magic'' and the ``relative entropy of magic'' \cite{Veitch2014}:\begin{align}
R_{\cM}^{\infty}({\rho})  & := \lim_{n \to \infty}\frac{1}{n} R_{\cM}({\rho}^{\otimes n}),\quad
R_{\cM}({\rho})  := \min_{\sigma \in \text{STAB}}D(\rho \| \sigma).\notag
\end{align}

\begin{proposition}
\label{prop: thauma collapse magic states}
The following equalities hold
\begin{align}
\theta_{\min}(\SSS) & = \theta^{\infty}(\SSS) = \theta_{\max}(\SSS) = \log_2(5/3), \notag\\
\theta_{\min}(\NN) & = \theta^{\infty}(\NN) = \theta_{\max}(\NN) = R_{\cM}^{\infty}(\NN) =  \log_2(3/2) ,\notag\\
\theta_{\min}(H_+) & = \theta^{\infty}(H_+) = \theta_{\max}(H_+) \notag \\
& = R_{\cM}^{\infty}(H_+) = \log_2(3-\sqrt 3), \notag \\
\theta_{\min}(T) & = \theta^{\infty}(T) = \theta_{\max}(T) = \log_2(1+2\sin(\pi/18)).
\notag
\end{align}
\end{proposition}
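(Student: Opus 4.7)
The proof plan rests on the sandwich $\theta_{\min}(\rho)\le \theta^{\infty}(\rho)\le \theta_{\max}(\rho)$ supplied by Proposition~\ref{prop: min theta add}. It is therefore enough to establish the outer equality $\theta_{\min}(\rho)=\theta_{\max}(\rho)$ for each of the four pure states, as the regularized thauma is then pinned to this common value. For the Norrell and $H_{+}$ states there is the additional assertion about $R_{\cM}^{\infty}$, and there the chain $\theta_{\min}(\rho)\le \theta^{\infty}(\rho)\le R_{\cM}^{\infty}(\rho)\le R_{\cM}(\rho)$, which follows from $\STAB\subset \cW$ together with the definitions, reduces that part of the claim to exhibiting a single stabilizer mixture $\sigma_{0}$ with $D(\psi\Vert\sigma_{0})$ equal to the claimed value.

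For $\theta_{\min}(\psi)=\theta_{\max}(\psi)$ on each pure state $\ket\psi$, I would work with the explicit dual pairs of SDPs displayed in the excerpt: the primal of $\theta_{\min}$ reads $-\log_{2}\max_{\sigma\in\cW}\langle\psi|\sigma|\psi\rangle$ with dual $-\log_{2}\min\{\Vert Q\Vert_{W,\infty}:Q\ge \proj\psi\}$, while $\theta_{\max}(\psi)=\log_{2}\min\{\Vert V\Vert_{W,1}:V\ge \proj\psi\}$. The plan is to propose an explicit $\sigma^{\star}\in\cW$, built as a convex mixture of a small number of stabilizer states adapted to the symmetries of the discrete Wigner function of $\ket\psi$, that achieves $\langle\psi|\sigma^{\star}|\psi\rangle=2^{-c}$ with $\Vert\sigma^{\star}\Vert_{W,1}=1$, where $c$ is the target value in the proposition. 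The natural candidate for the primal of $\theta_{\max}$ is then $V^{\star}=\sigma^{\star}/\langle\psi|\sigma^{\star}|\psi\rangle$, for which one verifies the operator inequality $V^{\star}\ge \proj\psi$ and computes $\Vert V^{\star}\Vert_{W,1}=2^{c}$. Certifying optimality on both sides via matching dual witnesses $Q^{\star}$ and weak duality collapses the two quantities to the same number $c$. The Strange, Norrell, and $H_{+}$ cases should yield clean closed-form optima of the low-dimensional qutrit SDPs; the $T$ state is the most delicate, with its answer $\log_{2}(1+2\sin(\pi/18))$ arising as a root of a trigonometric extremality equation coming from the dual SDP.

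For the equalities $R_{\cM}^{\infty}(\NN)=\log_{2}(3/2)$ and $R_{\cM}^{\infty}(H_{+})=\log_{2}(3-\sqrt{3})$, the lower bound is already in hand from the previous step, so only the upper bound $R_{\cM}(\psi)\le \theta_{\min}(\psi)$ remains. The key structural observation is that if a stabilizer mixture $\sigma_{0}\in\STAB$ commutes with $\proj\psi$, then the Umegaki relative entropy collapses to the Petz--R\'enyi zero form, $D(\psi\Vert\sigma_{0})=-\log_{2}\langle\psi|\sigma_{0}|\psi\rangle$, which matches the min-thauma value precisely when the eigenvalue of $\sigma_{0}$ on $\ket\psi$ is $2^{-\theta_{\min}(\psi)}$. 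The hard step of the proof is exhibiting such a diagonal-in-$\ket\psi$ stabilizer mixture for the Norrell and $H_{+}$ cases, presumably by twirling over a finite Clifford subgroup that stabilizes $\mathrm{span}(\ket\psi)$ and then verifying that the resulting overlap with $\ket\psi$ is exactly $2^{-\theta_{\min}(\psi)}$. This verification is where the bulk of the case-by-case computation will be concentrated and is the main technical obstacle of the proposition.
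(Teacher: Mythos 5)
Your high-level architecture is exactly the paper's: a dual witness $Q\ge P_\psi$ certifies $\theta_{\min}(\psi)\ge c$, a feasible element of $\cW$ certifies $\theta_{\max}(\psi)\le c$, and Proposition~\ref{prop: min theta add} squeezes $\theta^{\infty}$ in between; for the $R_{\cM}^{\infty}$ claims the paper likewise picks a stabilizer witness $\tau$ and uses $\tfrac1n R_{\cM}(\psi^{\ox n})\le \tfrac1n D_{\max}(\psi^{\ox n}\|\tau^{\ox n})=D_{\max}(\psi\|\tau)$, which for its particular witnesses (of which $\ket\psi$ is an eigenvector) coincides numerically with your ``commuting $\sigma_0$'' argument. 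So the strategy is sound and matches the source.

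Two substantive issues remain, however. First, the proposal is an outline: the explicit certificates $Q^\star$ and $\sigma^\star$ for each of the four states \emph{are} the proof, and none is exhibited; everything you label ``the main technical obstacle'' is precisely the content of the paper's appendix. Second, and more importantly, your proposed construction of $\sigma^\star$ as ``a convex mixture of a small number of stabilizer states'' cannot work for the Strange state or the $T$ state. The stabilizer fidelity of $\ket\SSS$ is $1/2$ (every pure qutrit stabilizer state has overlap at most $1/2$ with $\ket\SSS$), so no stabilizer mixture attains the required overlap $3/5=2^{-\log_2(5/3)}$; the paper's witness is the genuinely non-stabilizer subnormalized operator $\tfrac{3}{5}\proj\SSS$, which lies in $\cW$ only because $\|\SSS\|_{W,1}=5/3$. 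Similarly the $T$-state witness is a positive-Wigner-function state that is not presented as a stabilizer mixture. This is exactly why the proposition asserts the $R_{\cM}^{\infty}$ equalities only for $\NN$ and $H_+$ and not for $\SSS$ and $T$ — a distinction your plan, as written, would erase. You should restrict the stabilizer-mixture ansatz to the Norrell and $H_+$ cases and allow general elements of $\cW$ (non-positive mana, possibly with negative Wigner entries) for the other two.
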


Appendix~F \cite{Note1} contains a proof of Proposition~\ref{prop: thauma collapse magic states}.
In the forthcoming sections, we provide applications of Propositions~\ref{prop: min theta add} and \ref{prop: thauma collapse magic states} to the resource theory of non-stabilizer states.



\textbf{\textit{Fundamental limits for distilling non-stabilizer states.}}---The basic task of magic-state distillation \cite{Bravyi2005} can be understood as follows.  For any given quantum state $\rho$, we aim to transform this state to a collection of non-stabilizer states (e.g., $\ket T$) with high fidelity using stabilizer operations. 
The goal is to maximize the number of target states while keeping the transformation infidelity within some tolerance $\ve$. After magic-state distillation, one can use a circuit gadget (which requires only stabilizer operations) to transform this non-stabilizer state into a non-Clifford gate \cite{Gottesman1999,Zhou2000}. Protocols for distillation in the qudit setting of quantum computing were recently developed in \cite{Campbell2012b,Anwar2012,Campbell2014,Dawkins2015}.

 In the following, we study the fundamental limit of magic-state distillation of a pure target non-stabilizer state. We define the approximate one-shot distillable $\phi$-non-stabilizerness of a given state $\rho$ as the maximum number of $\proj {\phi}$ non-stabilizer states that can be obtained via stabilizer operations, while keeping the infidelity within a given tolerance. 
 Formally, for any triplet $(\rho,\phi,\ve)$ consisting of an initial state $\rho$, a target pure state $\phi$, and an infidelity tolerance $\ve$, 
the one-shot $\ve$-error distillable $\phi$-non-stabilizerness of $\rho$ is defined
to be the maximum number of $\phi$ non-stabilizer states achievable via stabilizer operations, with an error tolerance of $\ve$:
\begin{equation}
\cM_{\phi}^\ve(\rho) = \sup \{ k: \Lambda(\rho)\approx_\ve \proj {\phi}^{\ox k}, \ \Lambda \in \text{SO} \},
\notag
\end{equation}
where $\proj{\psi} \approx_\ve \sigma$ is a shorthand for $\bra{\psi}\sigma\ket{\psi}\geq 1 - \varepsilon$ and SO for stabilizer operations.

  In what follows, we focus on the one-shot distillable $H_+$-non-stabilizerness $\cM_{H_+}^\ve(\rho)$ and the one-shot distillable $T$-non-stabilizerness $\cM_{T}^\ve(\rho)$.
 
We first connect the task of magic-state distillation to quantum hypothesis testing between non-stabilizer states and operators in the set $\cW$ (recall that $\text{STAB}\subsetneq \cW$), and  we note here that such an approach was previously taken in  entanglement theory \cite{Rains1999,Rains2001}. Quantum hypothesis testing is the task of distinguishing two possible states $\rho_0$ and $\rho_1$ (null hypothesis $\rho_0$, alternative hypothesis $\rho_1$). We are allowed to perform a measurement characterized by the POVM $\{M, \1-M\}$ with respective outcomes $0$ and~$1$. If the outcome is $0$, we accept the null hypothesis. Otherwise, we accept the alternative one. The probabilities of \emph{type-I} and \emph{type-II} errors are given by $\tr (\1-M) \rho_0$ and $\tr M \rho_1$, respectively. The hypothesis testing relative entropy \cite{Buscemi2010,Wang2012} quantifies the minimum type-II error probability provided that the type-I error probability is within a given tolerance: 
$D_H^\ve(\rho_0\|\rho_1) :=-\log_2\min \big\{ \tr M\rho_1 \,\big|\, 0\le M\le \1,\, 1-\tr M\rho_0\le\ve \big\}$.

\begin{proposition}
\label{th:hypothesis bound magic}
Given a state $\rho$, the following holds
\begin{align}
\cM_{H_+}^\ve(\rho) & \le \frac{\min_{\sigma\in \cW}D_H^\ve(\rho\|\sigma)}{\log_2(3-\sqrt3)},\label{eq:H-magic-bnd}\\
\cM_{T}^\ve(\rho) & \le \frac{\min_{\sigma\in \cW}D_H^\ve(\rho\|\sigma)}{\log_2(1+2\sin(\pi/18))}.
\label{eq:T-magic-bnd}
\end{align}
\end{proposition}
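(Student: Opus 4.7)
The plan is to combine three ingredients: the data-processing / variational definition of the hypothesis testing relative entropy, the fact that stabilizer operations map the set $\cW$ into itself, and the max-overlap bound implied by the additivity of $\theta_{\min}$ in Proposition~\ref{prop: min theta add}, together with the explicit values of $\theta_{\min}(H_+)$ and $\theta_{\min}(T)$ from Proposition~\ref{prop: thauma collapse magic states}.

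First I would fix any stabilizer operation $\Lambda$ and any integer $k$ such that $\Lambda(\rho)\approx_{\ve}\proj{\phi}^{\ox k}$, where $\phi\in\{H_+,T\}$. For an arbitrary $\sigma\in\cW$, I would construct the operator $M:=\Lambda^{\dagger}(\proj{\phi}^{\ox k})$ and use it as a POVM element in the hypothesis test between $\rho$ and $\sigma$. Since $\Lambda$ is CPTP, its adjoint $\Lambda^{\dagger}$ is completely positive and unital, so $0\le M\le\1$. The type-I error is $1-\tr[M\rho]=1-\tr[\proj{\phi}^{\ox k}\Lambda(\rho)]\le\ve$ by assumption, so $M$ is feasible in the optimization defining $D_H^{\ve}(\rho\|\sigma)$. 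Therefore
\begin{equation}
-\log_2\tr[\proj{\phi}^{\ox k}\Lambda(\sigma)] \;=\; -\log_2\tr[M\sigma] \;\le\; D_H^{\ve}(\rho\|\sigma). \notag
\end{equation}

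Next I would invoke the key structural fact that $\Lambda$ maps $\cW$ into itself: stabilizer operations are monotone under the mana (this is the content of the monotonicity result of Ref.~\cite{Veitch2014}), so $\cM(\Lambda(\sigma))\le\cM(\sigma)\le0$, while trace preservation ensures $\tr\Lambda(\sigma)\le1$. Hence $\Lambda(\sigma)\in\cW$. Applying the additivity consequence stated immediately after Proposition~\ref{prop: min theta add}, namely that $\tr[\proj{\phi}^{\ox k}\tau]\le 2^{-k\theta_{\min}(\phi)}$ for any $\tau\in\cW$, with $\tau=\Lambda(\sigma)$ yields
\begin{equation}
k\,\theta_{\min}(\phi) \;\le\; -\log_2\tr[\proj{\phi}^{\ox k}\Lambda(\sigma)] \;\le\; D_H^{\ve}(\rho\|\sigma). \notag
\end{equation}
Since $\sigma\in\cW$ was arbitrary, taking the infimum over $\sigma$ and then the supremum over feasible $(\Lambda,k)$ gives $\cM_{\phi}^{\ve}(\rho)\,\theta_{\min}(\phi)\le\min_{\sigma\in\cW}D_H^{\ve}(\rho\|\sigma)$. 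Substituting $\theta_{\min}(H_+)=\log_2(3-\sqrt3)$ and $\theta_{\min}(T)=\log_2(1+2\sin(\pi/18))$ from Proposition~\ref{prop: thauma collapse magic states} yields \eqref{eq:H-magic-bnd} and \eqref{eq:T-magic-bnd}.

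The essentially routine step is the variational hypothesis-testing manipulation. The one place that requires some care is the claim $\Lambda(\cW)\subseteq\cW$: this is where the precise definition of ``stabilizer operations'' matters, and I would make explicit that the monotonicity of mana together with trace-non-increase under CPTP maps is what secures this containment. The rest of the argument is then a direct chaining of the three inequalities above.
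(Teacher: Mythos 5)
Your proposal is correct and follows essentially the same route as the paper: both arguments lower-bound $-\log_2\tr[\proj{\phi}^{\ox k}\Lambda(\sigma)]$ by $k\,\theta_{\min}(\phi)$ using the additivity consequence of Proposition~\ref{prop: min theta add} together with $\Lambda(\sigma)\in\cW$, and upper-bound it by $D_H^{\ve}(\rho\|\sigma)$. The only difference is presentational: the paper first notes that $\proj{\phi}^{\ox k}$ is a feasible test for $D_H^{\ve}(\Lambda(\rho)\|\Lambda(\sigma))$ and then invokes data processing, whereas you inline that step by pulling the test back to $M=\Lambda^{\dagger}(\proj{\phi}^{\ox k})$; you also usefully make explicit the containment $\Lambda(\cW)\subseteq\cW$ that the paper uses implicitly.
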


A consequence of Proposition~\ref{th:hypothesis bound magic} is that the thauma of a quantum state is an upper bound on its distillable $H_+$ (or $T$) non-stabilizerness.
Specifically, by applying the quantum Stein's lemma \cite{Hiai1991,ogawa2000strong,Tomamichel2013a,Li2014a}, we find the following:
\begin{corollary}
\label{coro:second order theorem}
The distillable non-stabilizerness of $\rho$ satisfies
\begin{align*}
\cM_{H_+}\!(\rho)=\lim_{\ve \to 0}\lim_{n\to \infty} \frac1n\cM_{H_+}^\ve(\rho^{\ox n}) & \le \frac{\theta(\rho)}{\log_2(3-\sqrt3)}, \notag\\
\cM_{T}(\rho) =\lim_{\ve \to 0}\lim_{n\to \infty} \frac1n\cM_T^\ve(\rho^{\ox n}) & \le \frac{\theta(\rho)}{\log_2(1+2\sin(\pi/18))}.
\notag
\end{align*}
\end{corollary}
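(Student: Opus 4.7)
The plan is to apply Proposition~\ref{th:hypothesis bound magic} to the i.i.d. state $\rho^{\ox n}$, divide by $n$, and then pass to the asymptotic regime using the quantum Stein's lemma. I would organize the argument in three short steps, treating the $H_+$ and $T$ bounds in parallel since the only difference is the denominator.

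First, I would invoke Proposition~\ref{th:hypothesis bound magic} with $\rho^{\ox n}$ in place of $\rho$ to obtain
\begin{equation}
\frac{1}{n}\,\cM_{H_+}^\ve(\rho^{\ox n}) \;\le\; \frac{1}{n\log_2(3-\sqrt{3})} \min_{\sigma_n\in\cW} D_H^\ve(\rho^{\ox n}\|\sigma_n).
\notag
\end{equation}
The key structural fact I need is that $\cW$ is closed under tensor products: for any $\sigma\in\cW$, the Wigner function of $\sigma^{\ox n}$ factorizes, so $\|\sigma^{\ox n}\|_{W,1}=\|\sigma\|_{W,1}^{\,n}$ and hence $\cM(\sigma^{\ox n})=n\,\cM(\sigma)\le 0$. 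Therefore $\sigma^{\ox n}\in\cW$, and the minimum on the right-hand side may be upper bounded by restricting to product operators, giving
\begin{equation}
\frac{1}{n}\,\cM_{H_+}^\ve(\rho^{\ox n}) \;\le\; \frac{1}{n\log_2(3-\sqrt{3})}\, D_H^\ve\!\bigl(\rho^{\ox n}\big\|\sigma^{\ox n}\bigr)
\notag
\end{equation}
for every fixed $\sigma\in\cW$.

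Second, I would apply the quantum Stein's lemma \cite{Hiai1991,ogawa2000strong,Tomamichel2013a,Li2014a}, which asserts that for any state $\rho$ and any positive operator $\sigma$ with $\supp(\rho)\subseteq\supp(\sigma)$,
\begin{equation}
\lim_{\ve\to 0}\lim_{n\to\infty}\frac{1}{n}\, D_H^\ve\!\bigl(\rho^{\ox n}\big\|\sigma^{\ox n}\bigr) \;=\; D(\rho\|\sigma).
\notag
\end{equation}
(For $\sigma\in\cW$ with $\supp(\rho)\not\subseteq\supp(\sigma)$ the bound $D(\rho\|\sigma)=+\infty$ is trivial.) Taking $\limsup_{n\to\infty}$ first and then $\ve\to 0$ on both sides of the previous display yields
\begin{equation}
\lim_{\ve\to 0}\lim_{n\to\infty}\frac{1}{n}\,\cM_{H_+}^\ve(\rho^{\ox n}) \;\le\; \frac{D(\rho\|\sigma)}{\log_2(3-\sqrt{3})}
\notag
\end{equation}
for every $\sigma\in\cW$.

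Third, since $\sigma\in\cW$ is arbitrary, I would take the infimum over $\sigma\in\cW$ on the right, which by definition equals $\theta(\rho)/\log_2(3-\sqrt{3})$. An identical argument, with $\log_2(3-\sqrt{3})$ replaced by $\log_2(1+2\sin(\pi/18))$, yields the bound for $\cM_{T}(\rho)$. The only delicate point is the order of operations: I do not try to exchange the minimum over $\sigma_n$ with the limit, but rather I first restrict to product $\sigma^{\ox n}$ and take the infimum over $\sigma\in\cW$ \emph{after} Stein's lemma, which is what makes the argument clean. Apart from this care with quantifiers, the proof is a direct single-letterization of the one-shot bound via Stein's lemma, with the closure of $\cW$ under tensor powers (i.e.\ additivity of mana) as the small but essential ingredient.
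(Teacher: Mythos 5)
Your proposal is correct and follows essentially the same route as the paper: apply the one-shot bound of Proposition~\ref{th:hypothesis bound magic} to $\rho^{\ox n}$, restrict the minimization to product operators $\sigma^{\ox n}$ (which lie in $\cW$ by additivity of mana), and single-letterize via the quantum Stein's lemma. The only cosmetic differences are that the paper fixes $\sigma$ to be a minimizer of $\theta(\rho)$ at the outset rather than optimizing at the end, and its appendix records the sharper second-order expansion of $D_H^\ve(\rho^{\ox n}\|\sigma^{\ox n})$ in place of plain Stein's lemma.
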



\begin{figure}
\begin{center}
\includegraphics[width=6.2cm]{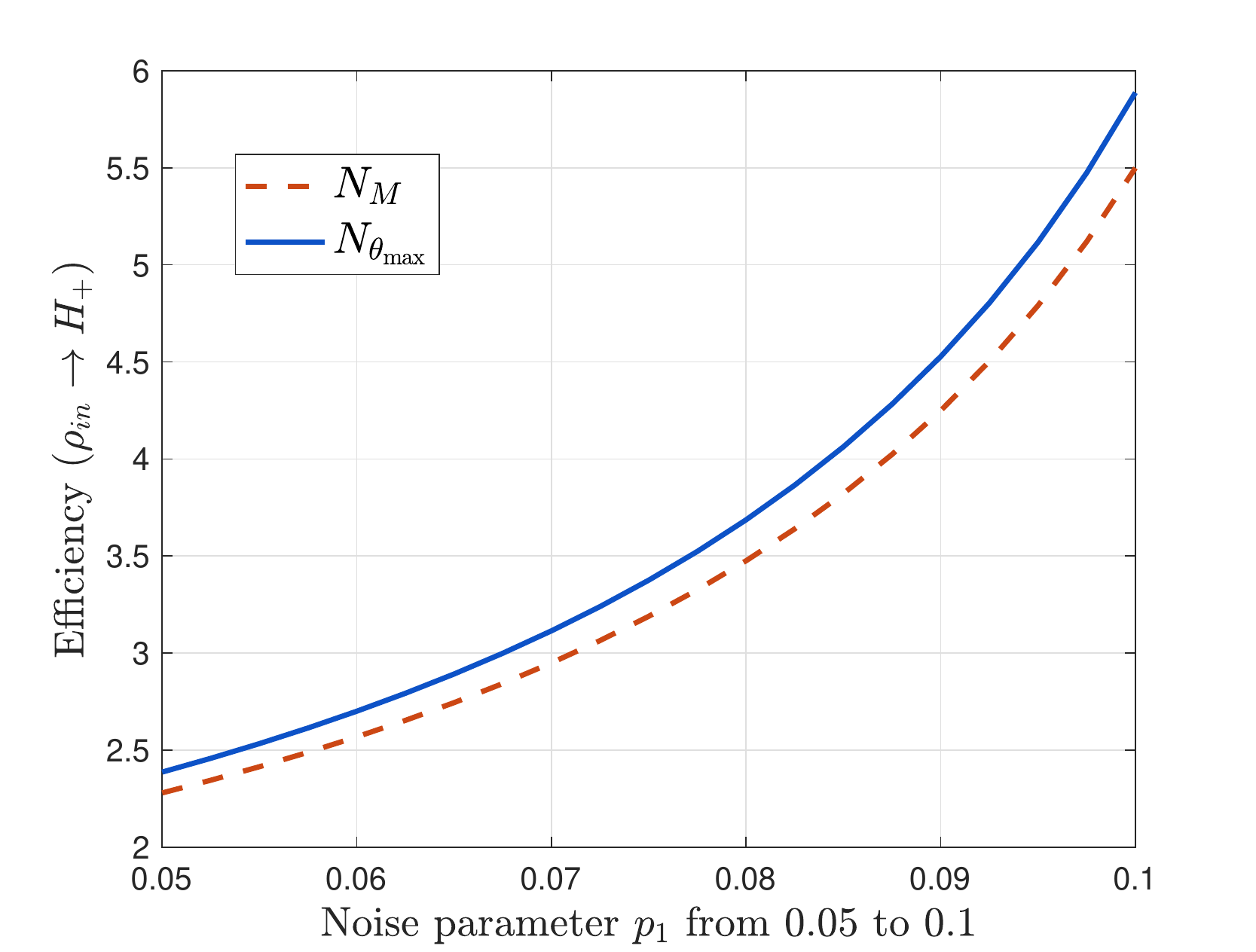}
\caption{Comparison between $N_{\theta_{\max}}(\rho_{\text{in}}\to H_+)$ and $N_\cM(\rho_{\text{in}}\to H_+)$ for input $\rho_{\text{in}}=(1-p_1-p_2)\proj{H_+}+p_1\proj{H_-}+p_2\proj{H_i}$ with $p_2 = 1/10$.}
\label{fig: ef 1}
\end{center}
\end{figure}

\textbf{\textit{Efficiency of magic-state distillation.}}---The efficiency of distilling a non-stabilizer state $\xi$ from several independent copies of a resource state $\rho$ is given by the minimum number of copies of $\rho$ needed, on average, to produce $\xi$ using stabilizer operations:
\begin{equation}
N_{\text{eff}}(\rho \to \xi)=
\inf\left\{n/p: \Lambda(\rho^{\ox n})\to \xi \text{ w/ prob. } p,\ \Lambda\in \text{SO}\right\}.
\notag
\end{equation}
 
Previously, the authors of \cite{Veitch2014} derived the following lower bound on the efficiency of magic-state distillation:
\begin{equation}\label{eq:lower-bnd-efficiency-veitch}
N_{\text{eff}}(\rho \to \xi)
\ge N_{\cM}(\rho,\xi):=\cM(\xi)/\cM(\rho).
\end{equation}
The lower bound in \cite{Veitch2014} was established by employing the mana of non-stabilizer states. Here, we utilize similar ideas and show that the max-thauma can also be applied to bound the efficiency of magic-state distillation.
\begin{proposition}
\label{th:eff bound}
The efficiency of distilling a non-stabilizer state $\xi$ from resource states $\rho$ is lower bounded by
$
N_{\theta_{\max}}(\rho,\xi):=\theta_{\max}(\xi)/\theta_{\max}(\rho).
$
\end{proposition}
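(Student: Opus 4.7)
The plan is to mimic the mana-based argument of \cite{Veitch2014} leading to \eqref{eq:lower-bnd-efficiency-veitch}, but with the max-thauma in place of the mana. Two ingredients from earlier in the paper do essentially all the work: the additivity of $\theta_{\max}$ under tensor products (Proposition~\ref{prop: min theta add}) and the strong monotonicity of $\theta_{\max}$ under stabilizer operations (Appendix~E), i.e.\ the statement that $\theta_{\max}$ does not increase on average when a stabilizer instrument is applied. A distillation protocol achieving the infimum in the definition of $N_{\text{eff}}(\rho\to\xi)$ can be modeled as a stabilizer instrument $\{\Lambda_i\}_i$ acting on $\rho^{\otimes n}$, one of whose outcomes $i=0$ occurs with probability $p=\tr\Lambda_0(\rho^{\otimes n})$ and heralds the target state $\xi = \Lambda_0(\rho^{\otimes n})/p$.

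First I would record that $\theta_{\max}$ is non-negative on all states: writing $D_{\max}(\rho\|\sigma)=\min\{\lambda:\rho\le 2^{\lambda}\sigma\}$ and using $\tr\sigma\le 1$ for every $\sigma\in\cW$, the condition $\rho\le 2^{\lambda}\sigma$ forces $1=\tr\rho\le 2^{\lambda}\tr\sigma\le 2^{\lambda}$, hence $\lambda\ge 0$. Next I would apply the on-average monotonicity from Appendix~E to the instrument $\{\Lambda_i\}$:
\begin{equation}
\theta_{\max}\!\left(\rho^{\otimes n}\right) \;\ge\; \sum_i p_i\,\theta_{\max}(\sigma_i),
\notag
\end{equation}
where $p_i=\tr\Lambda_i(\rho^{\otimes n})$ and $\sigma_i=\Lambda_i(\rho^{\otimes n})/p_i$. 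Dropping all $i\ne 0$ terms by non-negativity and using $\sigma_0=\xi$ gives
\begin{equation}
\theta_{\max}\!\left(\rho^{\otimes n}\right) \;\ge\; p\,\theta_{\max}(\xi).
\notag
\end{equation}

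Finally I would invoke additivity, $\theta_{\max}(\rho^{\otimes n})=n\,\theta_{\max}(\rho)$, rearrange to obtain $n/p \ge \theta_{\max}(\xi)/\theta_{\max}(\rho)$, and take the infimum over all protocols of the kind above to conclude
\begin{equation}
N_{\text{eff}}(\rho\to\xi) \;\ge\; \frac{\theta_{\max}(\xi)}{\theta_{\max}(\rho)} \;=\; N_{\theta_{\max}}(\rho,\xi).
\notag
\end{equation}
The only real technical subtlety, and the step I expect to need the most care, is ensuring that the strong monotonicity from Appendix~E applies to the full class of probabilistic stabilizer protocols implicit in the definition of $N_{\text{eff}}$; concretely, that any such protocol can be cast as a stabilizer instrument whose trace-non-increasing components are free operations in the resource theory. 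Once this identification is made, the rest of the argument is a one-line manipulation combining additivity with on-average monotonicity, exactly paralleling the role played by the mana in \eqref{eq:lower-bnd-efficiency-veitch}.
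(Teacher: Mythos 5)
Your proposal is correct and follows essentially the same route as the paper: the paper likewise combines the additivity of $\theta_{\max}$ with its on-average monotonicity under stabilizer instruments (Proposition~\ref{prop:max-thauma-additive-monotone}), writing the protocol's output as $p\,\proj{0}\otimes\xi+(1-p)\proj{1}\otimes\tau$ and discarding the failure branch exactly as you do via non-negativity of $\theta_{\max}$. Your explicit verification that $\theta_{\max}\ge 0$ (from $\tr\sigma\le 1$ for $\sigma\in\cW$) is a detail the paper leaves implicit, but otherwise the arguments coincide.
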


Figure~\ref{fig: ef 1} demonstrates that the lower bound from Proposition~\ref{th:eff bound} can be tighter than the lower bound in~\eqref{eq:lower-bnd-efficiency-veitch}, thus giving an improved sense of the efficiency.

\textit{\textbf{On the overhead of magic-state distillation.}}---The overhead of magic-state distillation is defined as the ratio of the number of
input to output states, under a target error rate \cite{Bravyi2012,Krishna2018a}. Although our notion of error for magic-state distillation is different from that typically employed in the literature, we note here that the inverse of the one-shot distillable $\phi$-non-stabilizerness (i.e.,  $[\cM_{\phi}^\varepsilon(\rho)]^{-1}$) can be considered a reasonable way to measure the overhead of magic-state distillation. Then our upper bounds in Proposition~\ref{th:hypothesis bound magic} 
and Corollary~\ref{coro:second order theorem} become lower bounds on the overhead.

\textbf{\textit{Inequivalence between non-stabilizer states with maximal mana.}}---A fundamental problem in any quantum resource theory is to determine the interconversion rate between different resource states \cite{Chitambar2018}, in particular, between given states and maximally resourceful states. This is rooted in the fact that in any resource theory maximally resourceful states play a unique role in quantifying the resourcefulness of other states and assessing the performance of resource manipulation. Considering entanglement theory (or coherence theory) as an example, the interconversion between a given state and maximally entangled (coherent) states leads to fundamental tasks such as entanglement (coherence) distillation and dilution~\cite{Horodecki2009a,Streltsov2016,Chitambar2018}. Notably, any two maximally entangled (coherent) states under all resource measures in the same dimension are equivalent under free operations.

However, this is not the case in the resource theory of non-stabilizer states. Surprisingly, we find that even though the Strange state and the Norrell state each have maximum mana and are thus the most costly resource to simulate on a classical computer using certain known algorithms~\cite{Veitch2014,PWB15}, they are not equivalent even in the asymptotic regime. Note that the mana plays a significant role as a measure of non-classical resources in quantum computation~\cite{Mari2012,PWB15}. In particular, recall that mana is a non-stabilizerness measure analogous to logarithmic negativity in entanglement theory. In contrast,  logarithmic negativity of a bipartite state is equal to its maximal value if and only if the state is maximally entangled.

To establish this result, we recall that the asymptotic conversion rate from $\rho$ to $\xi$ under asymptotically-non-stabilizer-non-generating transformations is given by the ratio of their regularized relative entropies of resource~\cite{Brandao2015a}. That is,
$
R(\rho\to\xi)={R_{\cM}^{\infty}({\rho}) }/{R_{\cM}^{\infty}( {\xi})}.
$
We further recall that the Strange and Norrell 
 states have maximum mana \cite{Veitch2014}:
$
\cM(\SSS)=\cM(\NN)=\log_2 (5/3).
$
However, Proposition~\ref{prop: thauma collapse magic states} and the fact that $R_{\cM}^{\infty}( {\SSS}) \ge \theta^{\infty}( {\SSS})$ indicate that there is a gap between their ``regularized relative entropies of magic.''
As a consequence, we find that

\begin{theorem}
\label{th: S inequal N}
For the Strange state $\ket {\SSS}$ and the Norrell state $\ket {\NN}$, the following holds
\begin{align}
R(\NN\to \SSS)=R_{\cM}^{\infty}( {\NN})/R_{\cM}^{\infty}( {\SSS})\le \frac {\log_2 (3/2)}{\log_2 (5/3)}<1. 
\notag
\end{align}
\end{theorem}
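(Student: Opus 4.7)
The plan is to chain together three inputs: the Brand\~ao--Plenio-type asymptotic conversion formula $R(\rho\to\xi)=R_{\cM}^{\infty}(\rho)/R_{\cM}^{\infty}(\xi)$ that is quoted just before the theorem, the exact values of the thauma quantities collected in Proposition~\ref{prop: thauma collapse magic states}, and the elementary inclusion $\text{STAB}\subsetneq \cW$. All three pieces are already in hand, so the work reduces to (i) pinning down the numerator $R_{\cM}^{\infty}(\NN)$ and (ii) producing a sufficiently large lower bound on the denominator $R_{\cM}^{\infty}(\SSS)$.

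For the numerator, Proposition~\ref{prop: thauma collapse magic states} already asserts $R_{\cM}^{\infty}(\NN)=\log_2(3/2)$, so nothing further is required there. For the denominator, the idea is to use the min-thauma (or, equivalently, the regularized thauma) as a lower bound on $R_{\cM}^{\infty}$. Since $\text{STAB}\subsetneq \cW$, taking the infimum of the relative entropy over the larger set $\cW$ can only decrease it, giving $\theta(\rho)\le R_{\cM}(\rho)$ for every state $\rho$. Applying this inequality to $\SSS^{\otimes n}$, dividing by $n$, and letting $n\to\infty$ yields $\theta^{\infty}(\SSS)\le R_{\cM}^{\infty}(\SSS)$; the existence of the left-hand limit is guaranteed by Proposition~\ref{prop: min theta add} together with Fekete's lemma (subadditivity following from $\theta_{\max}$-additivity and the chain $\theta_{\min}\le\theta\le\theta_{\max}$). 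Proposition~\ref{prop: thauma collapse magic states} then fixes $\theta^{\infty}(\SSS)=\log_2(5/3)$, so we obtain $R_{\cM}^{\infty}(\SSS)\ge \log_2(5/3)$.

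Combining these two estimates with the conversion-rate formula gives
\begin{equation}
R(\NN\to \SSS)=\frac{R_{\cM}^{\infty}(\NN)}{R_{\cM}^{\infty}(\SSS)}\le \frac{\log_2(3/2)}{\log_2(5/3)},
\notag
\end{equation}
and the strict inequality $\log_2(3/2)<\log_2(5/3)$ is immediate from $3/2<5/3$, which closes the argument.

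The main obstacle is conceptual rather than technical: one must be confident that the lower bound on $R_{\cM}^{\infty}(\SSS)$ furnished by the thauma is tight enough that the resulting ratio stays strictly below one. This is exactly why Proposition~\ref{prop: thauma collapse magic states} is doing the heavy lifting here; if $\theta^{\infty}(\SSS)$ had happened to coincide with $\log_2(3/2)$ rather than $\log_2(5/3)$, the argument would only give $R(\NN\to\SSS)\le 1$ and the asymmetry between the Strange and Norrell states would remain invisible. The numerical separation $\log_2(5/3)-\log_2(3/2)=\log_2(10/9)>0$ is therefore the essential point, and it is precisely what the additivity of $\theta_{\min}$ and $\theta_{\max}$ (Proposition~\ref{prop: min theta add}) together with the explicit evaluations in Proposition~\ref{prop: thauma collapse magic states} deliver.
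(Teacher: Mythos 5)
Your proposal is correct and follows essentially the same route as the paper: it combines the quoted conversion-rate formula with $R_{\cM}^{\infty}(\NN)=\log_2(3/2)$ from Proposition~\ref{prop: thauma collapse magic states} and the lower bound $R_{\cM}^{\infty}(\SSS)\ge\theta^{\infty}(\SSS)=\log_2(5/3)$, the latter following from $\text{STAB}\subsetneq\cW$. The only loose remark is attributing subadditivity of $\theta(\rho^{\otimes n})$ to the additivity of $\theta_{\max}$ (it actually comes from tensoring optimal feasible points in $\cW$), but this is immaterial since the sandwich $\theta_{\min}(\SSS)=\theta_{\max}(\SSS)$ already pins down $\theta^{\infty}(\SSS)$.
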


Since stabilizer operations are included in the set of asymptotically-non-stabilizer-non-generating transformations, this result also establishes that the rate to obtain the Strange state from the Norrell state is smaller than one under stabilizer operations. 
Thus, the gap between $R_{\cM}^{\infty}( {\NN})$ and $R_{\cM}^{\infty}( {\SSS})$, as established in Theorem~\ref{th: S inequal N}, closes an open question from \cite[Section 4.4]{Veitch2014}.

This result demonstrates a fundamental difference between the resource theory of non-stabilizer states and the resource theory of entanglement or coherence. Specifically, we show that the maximally resourceful non-stabilizer states under certain resource measure cannot be interconverted at a rate equal to one, even in the asymptotic regime, while the maximally resourceful states in entanglement theory or coherence theory can be interconverted equivalently in the one-copy setting. However, it remains open to determine whether the conversion rate from the Strange state to the Norrell state is strictly smaller than $\log_2 (5/3)/\log_2 (3/2)$. 
Such an inequality would imply the irreversibility of asymptotic magic state manipulation.

\textbf{\textit{Conclusions.}}---We have introduced the thauma family of measures to quantify and characterize the non-stabilizerness resource possessed by quantum states that are needed for universal quantum computation. The min- and max-thauma are efficiently computable by semi-definite programming and lead to bounds on the rates at which one can interconvert non-stabilizer states. These bounds have helped to solve pressing open questions in the resource theory of non-stabilizer states.
More generally, our work establishes fundamental limitations to the processing of quantum non-stabilizerness, opening new perspectives for its investigation and exploitation as a resource in quantum information processing and quantum technology. Along this line, we suspect that our results will have immediate impact on the quantum optics community working on the resource theory of non-Gaussianity \cite{ZSS18,TZ18,AGPF18} and continuous-variable quantum computing \cite{LB99,GKP01}, because the main idea behind the thauma measure can be generalized to this setting.

\begin{acknowledgments}
We thank Ali Hamed Mosavian and Victor Veitch for discussions regarding \cite{Veitch2014}. XW
acknowledges support from the Department of Defense. MMW acknowledges support from NSF under grant no.~1714215. YS acknowledges support from the Army Research Office (MURI award W911NF-16-1-0349), the Canadian Institute for Advanced Research, the National Science Foundation (grants 1526380 and 1813814), and the U.S.\ Department of Energy, Office of Science, Office of Advanced Scientific Computing Research, Quantum Algorithms Teams and Quantum Testbed Pathfinder programs.
\end{acknowledgments}


%


\clearpage
\appendix
\onecolumngrid
\begin{center}
\vspace*{.5\baselineskip}
{\textbf{\large Supplemental Material: \\[3pt] Efficiently computable bounds for magic state distillation }}\\[1pt] \quad \\
\end{center}

This supplemental material provides more detailed analysis and proofs of the results stated in the main text. On occasion we reiterate some of the steps in the main text in order to make the supplemental material more explicit and self-contained.

\section{The stabilizer formalism and discrete Wigner function\label{app: stabilizer formalism}}

\label{app:stabilizer-background}

For most known fault-tolerant schemes, the restricted set of quantum operations is the stabilizer operations, consisting of preparation and measurement in the computational basis and a restricted set of unitary operations. Here we review the basic elements of the stabilizer formalism for quantum systems of prime dimension.

We denote a Hilbert space of dimension $d$ by $\cH_d$ and the standard computational basis by $\{\ket j\}_{j=0,\ldots,d-1}$. For a prime number $d$, we define the respective shift and boost  operators $X,Z\in\cL(\cH_d)$ as
\begin{align}
X\ket j &= \ket{j\oplus 1} , \\
Z\ket j &=\omega^j \ket j, \qquad \omega=e^{2\pi i /d},
\end{align}
where $\oplus$ denotes addition modulo $d$.
We define the Heisenberg--Weyl operators as
\begin{align}
T_{\bu}= \tau^{-a_1a_2}Z^{a_1}X^{a_2},
\end{align}
where $\tau=e^{(d+1)\pi i/d}$ and $\bu=(a_1,a_2)\in \ZZ_d\times \ZZ_d$.

For a system with composite Hilbert space $\cH_{A}\ox\cH_B$, the Heisenberg--Weyl operators are
the tensor product of the subsystem Heisenberg--Weyl operators:
\begin{align}
T_{\bu_A\oplus \bu_B} = T_{\bu_A} \ox T_{\bu_B},
\end{align} 
where $\bu_A\oplus \bu_B$ is an element of $\ZZ_{d_A}\times\ZZ_{d_A}\times\ZZ_{d_B}\times\ZZ_{d_B}$.

The set $\cC_d$ of Clifford operators  is defined to be the set of unitary operators that map Heisenberg--Weyl operators to Heisenberg--Weyl operators under conjugation up to phases:
\begin{align}
U\in \cC_d \text{ iff } \forall \bu, \exists \, \theta,\bu', \text{ such that }
UT_\bu U^{\dagger} = e^{i\theta}T_{\bu'}.
\end{align}
The Clifford operators form the Clifford group. The Clifford group for qutrits can be generated by the qutrit Hadamard gate in \eqref{eq:qutrit-hadamard}, the phase gate $S = \proj{0} + \proj{1} + \omega \proj{2}$, and the sum gate $\sum_{i,j=0}^{2} \ket{i}\bra{i} \otimes \ket{i\oplus j}\bra{j}$ \cite{PhysRevA.71.042315}.

The pure stabilizer states can be obtained by applying Clifford operators to the state $\ket 0$:
\begin{align}
\{S_j\}_j=
\{U\proj0U^\dagger: U\in \cC_d \}.
\end{align}
The set of stabilizer states is the convex hull of the set of pure stabilizer states:
\begin{align}
\STAB(\cH_d)=\left\{\rho\in\cS(\cH_d): \rho=\sum_j p_j S_j, \ \forall j \ p_j\ge 0, \ \sum_j p_j = 1\right\}.
\end{align}

The discrete Wigner function \cite{Wootters1987,Gross2006,Gross2007} was used to show the existence of bound magic states \cite{Veitch2012}, which are states from which it is impossible to distill magic at a strictly positive rate. 
For an overview of discrete Wigner functions, we refer to \cite{Veitch2012,Veitch2014,Ferrie2011} for more details.
See also \cite{Ferrie2011} for quasi-probability representations of quantum
theory with applications to quantum information
science. 

For each point $\bu$ in the discrete phase space, there is a corresponding operator $A_\bu$, and the value of the discrete Wigner representation of $\rho$ at this point is given by
\begin{align}
W_{\rho}(\bu)=\frac1d \tr A_\bu\rho,
\end{align}
where $\{A_\bu\}_\bu$ are the phase-space point operators:
\begin{align}
A_\bu :=T_\bu A_0 T_\bu^\dag, \quad \quad A_0 :=\frac1d \sum_\bu T_\bu
.
\end{align}
In general, we define the discrete Wigner function of an operator $V$ as $W_V(\bu):=\frac1d \tr A_\bu V$.
Some properties of the set $\{A_\bu\}_\bu$ of phase-space point operators are listed as follows:
\begin{enumerate}
\item  $A_\bu$ is Hermitian;
\item $\sum_\bu A_\bu/d=\1$;
\item $\tr A_\bu A_{\bu'}= d\ \delta (\bu,\bu')$ ;
\item $\tr A_\bu=1$;
\item $V=\sum_\bu W_{V}(\bu) A_\bu$;
\item $\tr V Y = d \sum_{\bu} W_{V}(\bu) W_{Y}(\bu)$, for $d$-dimensional operators $V$ and $Y$.
\end{enumerate}
 
\begin{definition}
The Wigner trace norm of a Hermitian operator $V$ is defined as
\begin{align}
\|V\|_{W,1}:= \sum_\bu |W_V(\bu)| = \sum_\bu |\tr A_\bu V/d|.
\end{align}
\end{definition}

\begin{definition}
The Wigner spectral norm of a Hermitian operator $V$ acting on a space of dimension $d$ is defined as
\begin{align}\label{eq: inf norm wigner}
\|V\|_{W,\infty}:=d \max_\bu |W_V(\bu)| = \max_\bu |\tr A_\bu V|.
\end{align}
\end{definition}

From these definitions and duality of norms, we find that
\begin{align}
\|V\|_{W,1} & = \max_C \{ | \tr VC | : \|C\|_{W,\infty} \leq 1\} ,\\
\|V\|_{W,\infty} & = \max_C \{ | \tr VC | : \|C\|_{W,1} \leq 1\} .
\end{align}

\section{Efficient computation of thauma via convex optimization}

\label{app:eff-comp-thauma}

Let us recall that the thauma of a quantum state $\rho$ is given by 
 \begin{equation}
 \theta (\rho) := \min_{\sigma \in \cW} D(\rho\|\sigma) .
 \end{equation}
Note that $\theta(\rho)$ is a convex optimization problem with relative entropy as the objective function, due to the fact that the relative entropy is convex in $\sigma$ and the constraint $\sigma \in \cW$ is convex. Indeed, the minimization over $\sigma \in \cW$ can be equivalently written as a minimization over the following set:
\begin{equation}
\{\sigma, \{c_\bu\}_{\bu\in \ZZ_d\times \ZZ_d}  : \forall \bu, -c_\bu\le W_{\sigma}(\bu) \le c_\bu, \ \sum_{\bu} c_\bu \le 1, \ \sigma \ge 0\}.
\end{equation}
Then using the efficient optimization algorithm of the quantum relative entropy from \cite{Fawzi2017a}, we provide,  in our ancillary files on the arXiv, a program to compute the thauma $\theta(\rho)$ of a quantum state $\rho$.

\section{Generalized thauma measures}

\label{app:gen-thauma}

In the main text, we discussed the notion of generalized thauma as arising from any relative-entropy like measure that is non-increasing under the action of a quantum channel. Furthermore, we mentioned how one could have Petz--R\`enyi and sandwiched R\'enyi thauma measures. Here, we explicitly give the definitions of such measures.

Recall that the Petz--R\'enyi relative entropy of a state $\rho$ and a positive semi-definite operator $\sigma$ is defined for $\alpha \in (0,1)\cup(1,\infty)$ as \cite{P86}
\begin{align}
D_\alpha(\rho\|\sigma)\coloneqq\frac{1}{\alpha-1}\log_2\tr\left[\rho^\alpha\sigma^{1-\alpha}\right],
\end{align}
whenever either $\alpha\in(0,1)$ and $\rho$ is not orthogonal to $\sigma$ in Hilbert-Schmidt inner product or $ \alpha>1$ and $\supp(\rho)\subseteq\supp(\sigma)$. Otherwise we set $D_\alpha(\rho\|\sigma)\coloneqq\infty$.
In the above and throughout, we employ the convention that inverses are to be understood as generalized inverses.
For $\alpha\in\{0,1\}$, the Petz-R\'enyi relative entropy is defined in the limit as
\begin{align}
D_0(\rho\|\sigma)& \coloneqq\lim_{\alpha\to0}D_\alpha(\rho\|\sigma)=-\log_2
\tr\left[ P_{\rho} \sigma\right],
\\
D_1(\rho\|\sigma)& \coloneqq\lim_{\alpha\to1}D_\alpha(\rho\|\sigma)=D(\rho\|\sigma),
\label{eq:Petz-to-Umegaki}
\end{align}
where $P_{\rho}$ denotes the projection onto the support of $\rho$. The Petz--R\'enyi relative entropies are ordered, in the sense that
\begin{equation}
D_\alpha(\rho\|\sigma) \geq D_\beta(\rho\|\sigma),
\label{eq:Petz-Renyi-ordered}
\end{equation}
for $\alpha \geq \beta \geq 0$ \cite{Tomamichel2009,Mosonyi2011}.

The sandwiched R\'enyi relative entropy is defined for a state $\rho$ and a positive semi-definite operator $\sigma$ and $\alpha\in(0,1)\cup(1,\infty)$  as~\cite{Muller-Lennert2013,Wilde2014a}
\begin{align}
\widetilde{D}_\alpha(\rho\|\sigma)\coloneqq\frac{1}{\alpha-1}\log_2\tr\left[\left(\sigma^{\frac{1-\alpha}{2\alpha}}\rho\sigma^{\frac{1-\alpha}{2\alpha}}\right)^\alpha\right]
\end{align}
whenever either $ \alpha\in(0,1)$ and $\rho$ is not orthogonal to $\sigma$ in Hilbert-Schmidt inner product or $\alpha>1$ and $\supp(\rho)\subseteq\supp(\sigma)$. Otherwise we set $\widetilde{D}_\alpha(\rho\|\sigma)\coloneqq\infty$. For $\alpha=1$, we define the sandwiched R\'enyi relative entropy in the limit as \cite{Muller-Lennert2013,Wilde2014a}
\begin{align}
\widetilde{D}_1(\rho\|\sigma)\coloneqq\lim_{\alpha\to1}\widetilde{D}_\alpha(\rho\|\sigma)=D(\rho\|\sigma).
\label{eq:sandwiched-to-Umegaki}
\end{align}
We have that
\begin{align}
\widetilde{D}_{1/2}(\rho\|\sigma)=-\log_2 F(\rho,\sigma),
\end{align}
with Uhlmann's fidelity defined as $F(\rho,\sigma)\coloneqq\|\sqrt{\rho}\sqrt{\sigma}\|_1^2$ \cite{Uhlmann1976}.
In the limit $\alpha\to\infty$, the sandwiched R\'enyi relative entropy converges to the max-relative entropy~\cite{Datta2009}
\begin{align}
D_{\max}(\rho\|\sigma) \coloneqq\widetilde{D}_\infty(\rho\|\sigma)& \coloneqq\lim_{\alpha\to\infty}\widetilde{D}_\alpha(\rho\|\sigma)
\\
& =\log_2 \left\|\sigma^{-1/2}\rho\sigma^{-1/2}\right\|_\infty \\
& =\inf\left\{\lambda:\rho\leq2^{\lambda}\cdot\sigma\right\},
\end{align}
as shown in \cite{Muller-Lennert2013}. The  sandwiched R\'enyi relative entropies are ordered, in the sense that
\begin{equation}
\widetilde{D}_\alpha(\rho\|\sigma) \geq \widetilde{D}_\beta(\rho\|\sigma),
\label{eq:sandwiched-Renyi-ordered}
\end{equation}
for $\alpha \geq \beta \geq 0$ \cite{Muller-Lennert2013}.

The Petz--R\'enyi relative entropy obeys the data-processing inequality for all $\alpha \in [0,1)\cup(1,2]$ \cite{P86}:
\begin{equation}
D_\alpha(\rho\|\sigma) \geq D_\alpha(\cN(\rho)\|\cN(\sigma)),
\end{equation}
 it is non-negative for a state $\rho$ and a subnormalized state $\sigma$, and equal to zero if $\rho = \sigma$. Thus, the Petz--R\'enyi thauma defined as
 \begin{equation}
 \theta_\alpha (\rho) := \min_{\sigma \in \cW} D_\alpha(\rho\|\sigma)
 \end{equation}
 is a magic monotone. We remark here that the min-thauma $\theta_{\min}(\rho)$ employed in the main text is equal to $\theta_0 (\rho)$.
 
The sandwiched R\'enyi relative entropy obeys the data-processing inequality for all $\alpha \in [1/2,1)\cup(1,\infty]$ \cite{FL13,Beigi2013}:
\begin{equation}
\widetilde{D}_\alpha(\rho\|\sigma) \geq \widetilde{D}_\alpha(\cN(\rho)\|\cN(\sigma)),
\end{equation}
 it is non-negative for a state $\rho$ and a subnormalized state $\sigma$, and equal to zero if $\rho = \sigma$. (See \cite{Wilde2018a} for an alternative proof of the data processing inequality.) Thus, the sandwiched R\'enyi thauma defined as
 \begin{equation}
 \widetilde{\theta}_\alpha (\rho) := \min_{\sigma \in \cW} \widetilde{D}_\alpha(\rho\|\sigma)
 \end{equation}
 is a magic monotone. We remark here that the max-thauma $\theta_{\max}(\rho)$ employed in the main text is equal to $\lim_{\alpha \to \infty}\widetilde{\theta}_\alpha (\rho)$.
 
From the ordering of the R\'enyi relative entropies in \eqref{eq:Petz-Renyi-ordered} and \eqref{eq:sandwiched-Renyi-ordered} and the limits in \eqref{eq:Petz-to-Umegaki} and \eqref{eq:sandwiched-to-Umegaki}, we conclude the following ordering of some of the thauma measures:
\begin{equation}
\theta_{\min}(\rho) \leq \theta(\rho) \leq \theta_{\max}(\rho).
\label{eq:ordering-thauma}
\end{equation}


 \section{SDPs of min- and max-thauma and  proof of Proposition~\ref{prop: min theta add}}
 
 \label{app:min-thauma-proofs}
 
  \begingroup
\renewcommand{\theproposition}{\ref{prop: min theta add}
}
We first derive the dual SDP for the min-thauma:
\begin{align}
	\theta_{\min}(\rho)&= -\log_2 \max\{\tr P_\rho\sigma: \sum_{\bu} |W_{\sigma}(\bu)|\le 1,\  \sigma \ge 0 \} \label{eq:theta max 1}\\
	&=-\log_2 \max\{\tr P_\rho\sigma:\forall \bu, -c_\bu\le W_{\sigma}(\bu) \le c_\bu, \ \sum_{\bu} c_\bu \le 1, \ \sigma \ge 0 \} \label{eq:min dual 1}\\
& =-\log_2 \min\{t:\forall \bu, \ x_\bu+y_\bu\le t, \ x_\bu,y_\bu\ge 0, \ \sum_{\bu} \frac{1}{d}A_\bu(x_\bu-y_\bu)\ge P_\rho \} \label{eq:min dual 2}\\
&=-\log_2 \min\{t:\forall \bu, \ |q_\bu|\le t, \ \sum_{\bu} \frac{1}{d}A_\bu q_\bu\ge P_\rho \} \label{eq:min dual 3}\\
&=-\log_2 \min\{t:\forall \bu,  \ -t  \le \tr A_{\bu}Q \le t,\  Q\ge P_\rho \} \label{eq:min dual 4}
\end{align}
In the above proof, Eq.~\eqref{eq:min dual 1} follows from the definition of the Wigner trace norm. Eq.~\eqref{eq:min dual 2} utilizes the duality of SDPs. Eq.~\eqref{eq:min dual 3} introduces a new variable $q_\bu=x_{\bu}-y_\bu$.

Thus, adapting the definition of the Wigner spectral norm in \eqref{eq: inf norm wigner}, we find that 
\begin{align}
\theta_{\min}(\rho)&=-\log_2 \min\{t:\forall \bu,  -t  \le \tr A_{\bu}Q \le t,\ Q\ge P_\rho \}\\
&= -\log_2 \min\{ \|Q\|_{W,\infty}: Q\ge P_\rho\} .
 \label{eq:Rmin min}
\end{align}

  For max-thauma, its primal and dual SDPs are as follows:
\begin{align}
	\theta_{\max}(\rho)&= \log_2 \max\{\tr \rho G: \|G\|_{W,\infty}\le 1,\ G\ge 0\} \label{eq:rmax max}\\
	&= \log_2 \min\{\tr V: \rho \le V, \ \|V\|_{W,1}\le 1\}. \label{eq:rmax min}
	\end{align}

 \begin{proposition}[Additivity of min- and max-thauma]
 For states $\rho$ and $\tau$, the following additivity equality holds
\begin{align}
\theta_{\min}(\rho\ox\tau) & =\theta_{\min}(\rho)+\theta_{\min}(\tau),\\
\theta_{\max}(\rho\ox\tau) & =\theta_{\max}(\rho)+\theta_{\max}(\tau).
\end{align}
Consequently,
$\theta_{\min}(\rho) \leq \theta^{\infty}(\rho) \leq \theta_{\max}(\rho) $.
\end{proposition}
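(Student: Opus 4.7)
The plan is to prove each additivity identity via a pair of SDP-duality arguments and then deduce the ordering. The whole argument rests on the tensor factorization of the phase-space point operators, $A_{\bu_A\oplus\bu_B}=A_{\bu_A}\otimes A_{\bu_B}$, which forces $W_{X\otimes Y}(\bu_A\oplus\bu_B)=W_X(\bu_A)\,W_Y(\bu_B)$ and hence the multiplicativity of both Wigner norms on product operators, $\|X\otimes Y\|_{W,1}=\|X\|_{W,1}\|Y\|_{W,1}$ and $\|X\otimes Y\|_{W,\infty}=\|X\|_{W,\infty}\|Y\|_{W,\infty}$. Two immediate corollaries drive everything: $\cW$ is closed under tensor products, since $\cM(\sigma_1\otimes\sigma_2)=\cM(\sigma_1)+\cM(\sigma_2)\le 0$, and the support projections satisfy $P_{\rho\otimes\tau}=P_\rho\otimes P_\tau$.

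For $\theta_{\min}$, I would obtain each inequality by substituting a product point into one of the SDPs derived above. From the primal \eqref{eq:theta max 1}, optimal $\sigma_\rho,\sigma_\tau\in\cW$ assemble to $\sigma_\rho\otimes\sigma_\tau\in\cW$ with overlap $(\tr P_\rho\sigma_\rho)(\tr P_\tau\sigma_\tau)$ against $P_{\rho\otimes\tau}$, giving $\theta_{\min}(\rho\otimes\tau)\le\theta_{\min}(\rho)+\theta_{\min}(\tau)$ after taking $-\log_2$. For the reverse inequality I use the dual \eqref{eq:Rmin min}: a product $Q_\rho\otimes Q_\tau$ of optimal dual witnesses dominates $P_{\rho\otimes\tau}$, and multiplicativity of $\|\cdot\|_{W,\infty}$ yields the matching lower bound.

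For $\theta_{\max}$ the structure is symmetric. The primal \eqref{eq:rmax max} supplies the $\ge$ direction at once, since a product $G_\rho\otimes G_\tau$ of primal optimizers remains PSD with Wigner spectral norm at most one, and $\tr[(\rho\otimes\tau)(G_\rho\otimes G_\tau)]$ factorizes. For the $\le$ direction I use the dual form $\theta_{\max}(\rho)=\log_2\min\{\|V\|_{W,1}:V\ge\rho\}$, where the one nontrivial step is verifying $V_\rho\otimes V_\tau\ge\rho\otimes\tau$ for optimal $V_\rho,V_\tau$. This I handle via the identity
\begin{equation*}
V_\rho\otimes V_\tau-\rho\otimes\tau=(V_\rho-\rho)\otimes V_\tau+\rho\otimes(V_\tau-\tau),
\end{equation*}
together with $V_\rho,V_\tau\ge 0$ (forced by $V\ge\rho\ge 0$), so that each summand is PSD. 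This operator-order check is the one place where the argument departs from routine SDP bookkeeping and is the main (if minor) obstacle.

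Finally, to deduce the ordering I combine the additivity just established with the pointwise inequality \eqref{eq:ordering-thauma}. Applied to $\rho^{\otimes n}$ it reads $n\,\theta_{\min}(\rho)=\theta_{\min}(\rho^{\otimes n})\le\theta(\rho^{\otimes n})\le\theta_{\max}(\rho^{\otimes n})=n\,\theta_{\max}(\rho)$; dividing by $n$ and sending $n\to\infty$ gives $\theta_{\min}(\rho)\le\theta^{\infty}(\rho)\le\theta_{\max}(\rho)$.
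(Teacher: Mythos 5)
Your proof is correct and follows essentially the same route as the paper: product feasible points in the primal and dual SDPs, combined with multiplicativity of the Wigner $\ell_1$ and spectral norms under tensor products, yield the two inequalities for each quantity, and the ordering then follows from $\theta_{\min}\le\theta\le\theta_{\max}$ applied to $\rho^{\otimes n}$ together with additivity. The only difference is that you explicitly carry out the max-thauma case (including the operator-order check $V_\rho\otimes V_\tau\ge\rho\otimes\tau$ via the decomposition into PSD summands), which the paper dismisses as ``similar.''
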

\begin{proof}
  Suppose that the optimal solutions to the SDP~\eqref{eq:theta max 1} of states $\rho_1$ and $\rho_2$ are $\sigma_1$ and $\sigma_2$, respectively. It is clear that 
  \begin{align}
  	\|\sigma_1\ox \sigma_2\|_{W,1}= \|\sigma_1\|_{W,1}\times \|\sigma_2\|_{W,1}\le 1,
  \end{align}
  which means that
  \begin{align}
  \theta_{\min}(\rho_1\ox \rho_2)&\le -\log_2 \tr (P_{\rho_1}\ox P_{\rho_2})(\sigma_1\ox \sigma_2)\\ &=\theta_{\min}(\rho_1)+\theta_{\min}(\rho_2).
  \end{align}
  
  On the other hand, suppose that the optimal solutions to the SDP~\eqref{eq:Rmin min} of $\rho_1$ and $\rho_2$ are $Q_1$ and $Q_2$, respectively. It is clear that 
  \begin{align}
 	Q_1 \ox Q_2 \ge P_1\ox P_2,
  \end{align}
  which means that
  \begin{align}
  \theta_{\min}(\rho_1\ox \rho_2)& \ge -\log_2 \|Q_1\ox Q_2\|_{W,\infty}= -\log_2(\|Q_1\|_{W,\infty}\times \|Q_2\|_{W,\infty})\\ &=\theta_{\min}(\rho_1)+\theta_{\min}(\rho_2).
  \end{align}
  
The proof of additivity of max-thauma is similar to the proof of additivity given above for min-thauma. The idea is to use both the primal and dual SDPs of the max-thauma.

The ordering property $\theta_{\min}(\rho) \leq \theta^{\infty}(\rho) \leq \theta_{\max}(\rho) $ is a consequence of the ordering in  \eqref{eq:ordering-thauma} and additivity of min- and max-thauma.
\end{proof}
\endgroup

\section{Monotonicity of max-thauma on average under stabilizer operations}
\label{app:max-thauma-props}

\begin{proposition}
\label{prop:max-thauma-additive-monotone}
The max-thauma does not increase on average under stabilizer operations. That is, suppose that there is a general stabilizer protocol $\cN=\{ \cN_j\}_j$ that transforms $\rho$ to $\sigma_j=\cN_j(\rho)/p_j$ with probability $p_j=\tr \cN_j(\rho)$, where $\cN_j$ is a completely positive stabilizer map and $\sum_j \cN_j$ is trace preserving. Then
\begin{equation}
\theta_{\max}(\rho) \geq \sum_j  p_j {\theta_{\max}(\sigma_j)}.
\end{equation}
\end{proposition}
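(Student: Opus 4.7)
The plan is to combine the primal SDP characterization of the max-thauma from~\eqref{eq:rmax min}, namely $2^{\theta_{\max}(\rho)} = \min\{\|V\|_{W,1} : \rho \le V\}$, with a ``branch-wise'' Wigner-trace-norm contractivity of the collection $\{\cN_j\}_j$, and then finish with Jensen's inequality applied to the concavity of the logarithm.

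First, I would let $V^\star$ be an optimizer of the primal SDP for $\theta_{\max}(\rho)$, so that $\rho \le V^\star$ and $\|V^\star\|_{W,1} = 2^{\theta_{\max}(\rho)}$. Since each $\cN_j$ is completely positive, applying it to both sides of $\rho \le V^\star$ yields $p_j \sigma_j = \cN_j(\rho) \le \cN_j(V^\star)$, equivalently $\sigma_j \le \cN_j(V^\star)/p_j$. Thus $\cN_j(V^\star)/p_j$ is feasible in the primal SDP for $\theta_{\max}(\sigma_j)$, giving the branch inequality $p_j \, 2^{\theta_{\max}(\sigma_j)} \le \|\cN_j(V^\star)\|_{W,1}$.

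Next I would establish the collective bound $\sum_j \|\cN_j(V^\star)\|_{W,1} \le \|V^\star\|_{W,1}$. This is the step that uses the stabilizer structure: each completely stabilizer-preserving CP map $\cN_j$ acts on discrete phase space as a nonnegative kernel, $W_{\cN_j(V)}(\bu') = \sum_\bu T_j(\bu'|\bu)\, W_V(\bu)$ with $T_j(\bu'|\bu) \ge 0$, and trace preservation of $\sum_j \cN_j$ forces $\sum_{j,\bu'} T_j(\bu'|\bu) = 1$ for every $\bu$. The triangle inequality then gives
\begin{align}
\sum_j \|\cN_j(V^\star)\|_{W,1}
&= \sum_{j,\bu'} \Bigl|\sum_\bu T_j(\bu'|\bu)\, W_{V^\star}(\bu)\Bigr| \notag \\
&\le \sum_\bu \Bigl(\sum_{j,\bu'} T_j(\bu'|\bu)\Bigr) |W_{V^\star}(\bu)| \notag \\
&= \|V^\star\|_{W,1}. \notag
\end{align}
Combining the branch inequalities with this collective bound produces $\sum_j p_j\, 2^{\theta_{\max}(\sigma_j)} \le 2^{\theta_{\max}(\rho)}$, and Jensen's inequality for $\log_2$ then delivers $\sum_j p_j\, \theta_{\max}(\sigma_j) \le \log_2\!\bigl(\sum_j p_j\, 2^{\theta_{\max}(\sigma_j)}\bigr) \le \theta_{\max}(\rho)$, as required.

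The main obstacle will be justifying the phase-space stochastic-kernel representation of the sub-channels $\cN_j$ cleanly, since the notion of ``stabilizer operation'' in use here encompasses ancilla preparation, Clifford unitaries, and computational-basis measurements with post-selection on the classical outcome labeled by $j$. One should verify that such protocols, restricted to a single branch, are precisely the completely Wigner-nonincreasing CP maps in the sense of Veitch \emph{et al.}, so that the kernel $T_j(\bu'|\bu)$ is indeed nonnegative and that the normalization $\sum_{j,\bu'} T_j(\bu'|\bu) = 1$ is inherited from the trace preservation of $\sum_j \cN_j$. Once that structural fact is in place, the remaining inequalities are immediate from the triangle inequality and concavity of the logarithm.
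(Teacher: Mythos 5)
Your proposal is correct and follows essentially the same route as the paper's proof: take the optimizer $V^\star$ of the primal SDP for $\theta_{\max}(\rho)$, push it through each branch $\cN_j$ to get feasible points for $\theta_{\max}(\sigma_j)$, invoke the collective contractivity $\sum_j \|\cN_j(V^\star)\|_{W,1} \le \|V^\star\|_{W,1}$ (which the paper simply cites from the proof of Theorem 4 of Veitch \emph{et al.}, and which you correctly identify as the nonnegative-phase-space-kernel fact requiring verification), and finish with concavity of $\log_2$. The only difference is that you sketch the kernel argument explicitly rather than citing it, which if anything makes the argument more self-contained.
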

\begin{proof}
Let us suppose $V$ is the optimal solution to the SDP~\eqref{eq:rmax min}. From complete positivity of the map $\cN_j$, it follows that
\begin{align}
\cN_j (V) /p_j \ge \cN_j(\rho) / p_j,
\end{align}
which means that $\cN_j (V) / p_j$ is feasible for the  SDP~\eqref{eq:rmax min} of 
$\theta_{\max}(\sigma_j)$. Thus,
 \begin{align}
 2^{\theta_{\max}(\sigma_j)} \le \|\cN_j (V) / p_j\|_{W,1},
 \end{align}
 and we find that
 \begin{align}
 \theta_{\max}(\rho) &= \log_2 \|V\|_{W,1}\\
 & \ge \log_2 \sum_j \|\cN_j(V)\|_{W,1}\\
  & = \log_2 \sum_j p_j \|\cN_j(V)/p_j\|_{W,1}\\
 &\ge \log_2 \sum_j p_j 2^{\theta_{\max}(\sigma_j)}\\
 & \ge\sum_j  p_j \log_2 2^{\theta_{\max}(\sigma_j)}\\
 &=\sum_j  p_j {\theta_{\max}(\sigma_j)}.
 \end{align}
 The first inequality follows from an inequality stated in the proof of Theorem 4 of~\cite{Veitch2014}. It can be understood as being a consequence of the fact that the stabilizer operation $\{\cN_j\}_j$ has a discrete Wigner representation as a conditional probability distribution, which does not increase the classical one norm. 
\end{proof}


\section{Min-, regularized, and max-thaumas of Strange state, Norrell state, $H_+$ state, and $T$ state, and proof of Theorem~\ref{th: S inequal N}}

\label{app:example-states-irrev}

\begingroup
\renewcommand{\theproposition}{\ref{prop: thauma collapse magic states}}

Let us first recall the definitions of the Strange state, Norrell state, $H_+$ state, and $T$ state:
\begin{align}
\ket \SSS & = \frac{1}{\sqrt 2} (\ket 1-\ket 2), \\ 
\ket \NN & = \frac{1}{\sqrt 6} (-\ket 0+2\ket 1-\ket 2),\\
\ket{H_+} & = \frac{(1+\sqrt{3})\ket 0 + \ket 1 + \ket{2}}{\sqrt{2(3+\sqrt{3})} },\\
\ket T & = \frac{1}{\sqrt 3}(\xi  \ket  0+ \ket 1 + \xi^{-1}\ket 2),
\end{align}
where $\xi = e^{2 \pi i / 9}$ and we recall that $H_+$ magic state is the $+1$-eigenstate of the $H$ gate defined in \eqref{eq:qutrit-hadamard}.

\begin{proposition}
For the Strange state, Norrell state, $H_+$ state, and $T$ state, their min-thauma, regularized thauma, max-thauma, and in some cases, regularized relative entropy of magic can be evaluated exactly as
\begin{align}
  \theta_{\min}(\SSS)  & = \theta^{\infty}(\SSS) = \theta_{\max}(\SSS) = \log_2(5/3), \label{eq:thauma-Strange-state}\\
R_{\cM}^{\infty}(\NN) = \theta_{\min}(\NN) & = \theta^{\infty}(\NN) = \theta_{\max}(\NN)  =  \log_2(3/2) ,
\label{eq:thauma-Norrell-state} \\
R_{\cM}^{\infty}(H_+) = \theta_{\min}(H_+) &  = \theta^{\infty}(H_+) = \theta_{\max}(H_+) 
  = \log_2(3-\sqrt 3), \label{eq:thauma-H-state} \\
 \theta_{\min}(T) & = \theta^{\infty}(T) = \theta_{\max}(T) = \log_2(1+2\sin(\pi/18)).
\label{eq:thauma-T-state}
\end{align}
\end{proposition}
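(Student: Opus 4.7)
The plan is to leverage the sandwich $\theta_{\min}(\rho) \leq \theta^{\infty}(\rho) \leq \theta_{\max}(\rho)$ from Proposition~\ref{prop: min theta add}, which reduces each thauma collapse to establishing the single inequality $\theta_{\max}(\psi) \leq \theta_{\min}(\psi)$ for each pure state $|\psi\rangle \in \{\SSS,\NN,H_+,T\}$. I will prove this by exhibiting matching feasible points in the SDPs \eqref{eq:rmax min} and \eqref{eq:theta max 1}, which by weak duality pin down the common value.

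The key structural observation is that for a pure state $|\psi\rangle$ we have $\theta_{\min}(\psi) = -\log_2 \max_{\sigma \in \cW}\langle\psi|\sigma|\psi\rangle$, and if one can choose an optimizer $\sigma^\star$ which is block-diagonal with respect to the decomposition $\CC|\psi\rangle \oplus |\psi\rangle^\perp$, then $V := \sigma^\star/c$ with $c := \langle\psi|\sigma^\star|\psi\rangle$ satisfies $V \geq |\psi\rangle\langle\psi|$ by a one-line block calculation. Since $\|V\|_{W,1} = \|\sigma^\star\|_{W,1}/c \leq 1/c$, this $V$ is feasible in the max-thauma SDP, yielding $\theta_{\max}(\psi) \leq -\log_2 c = \theta_{\min}(\psi)$. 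The entire proof therefore reduces to producing, for each state, such a block-diagonal sub-normalized $\sigma^\star \in \cW$ with overlap $c$ equal to $3/5$, $2/3$, $1/(3-\sqrt{3})$, and $1/(1+2\sin(\pi/18))$ respectively. I would do this by first computing the discrete Wigner function of $|\psi\rangle\langle\psi|$, then twirling the ansatz over the Clifford stabilizer subgroup $G_\psi$ of $|\psi\rangle\langle\psi|$ to cut the search to a low-dimensional linear program in the parameters of $\sigma$; the block-diagonality is then automatic whenever $G_\psi$ acts irreducibly on $|\psi\rangle^\perp$, which is the situation for these high-symmetry states.

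For the additional equalities $R_{\cM}^{\infty}(\NN) = \log_2(3/2)$ and $R_{\cM}^{\infty}(H_+) = \log_2(3-\sqrt{3})$, the inclusion $\STAB \subsetneq \cW$ gives $R_{\cM}^{\infty}(\rho) \geq \theta^{\infty}(\rho)$ for free, so only the matching upper bound $R_{\cM}^{\infty}(\rho) \leq R_{\cM}(\rho) \leq$ (target value) is needed. I would establish this by explicitly constructing, for each of $\NN$ and $H_+$, a symmetry-adapted stabilizer state $\tau$ for which $D(\rho\|\tau)$ evaluates exactly to the target; for the Norrell state this is essentially the calculation already in Veitch et al.~\cite{Veitch2014}, and for $H_+$ a parallel construction based on the threefold Clifford symmetry of $|H_+\rangle$ should work. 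The main obstacle throughout is producing the right $\sigma^\star \in \cW$: the constraints $\sigma \geq 0$ and $\sum_\bu |W_\sigma(\bu)| \leq 1$ together form an SDP with a non-smooth $\ell^1$ constraint, and one must verify both that no non-symmetric perturbation improves the overlap (ideally by producing a matching dual-SDP witness $Q \geq |\psi\rangle\langle\psi|$ with $\|Q\|_{W,\infty} = c$) and that the block-diagonality condition is met on the nose so that the primal--dual gap closes.
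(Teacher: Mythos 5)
Your strategy is essentially the paper's: sandwich the regularized thauma between $\theta_{\min}$ and $\theta_{\max}$ via Proposition~\ref{prop: min theta add}, close the gap by exhibiting feasible points for the SDPs on both sides, and obtain $R_{\cM}^{\infty}$ by combining $R_{\cM}^{\infty}\ge\theta^{\infty}$ (from $\STAB\subsetneq\cW$) with a stabilizer-state upper bound. Your block-diagonality observation is a genuinely nice unification and is in fact satisfied by all four of the paper's optimizers: e.g.\ for the paper's Norrell witness $\tau$ one checks $\tau\ket{\NN}=\tfrac23\ket{\NN}$, and for the $H_+$ witness $\sigma$ one has $\braandket{H_+}{\sigma}{H_+}\cdot\braandket{H_+}{\sigma^{-1}}{H_+}=1$, which by Cauchy--Schwarz forces $\ket{H_+}$ to be an eigenvector of $\sigma$. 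So the single-certificate reduction $\theta_{\max}(\psi)\le-\log_2 c$ via $V=\sigma^\star/c$ is valid.

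Two caveats. First, the proof is not complete as written: for a statement whose content is four exact numerical values, the explicit certificates \emph{are} the proof, and you never produce the operators $\sigma^\star$ and the dual witnesses $Q\ge P_\psi$ with $\Vert Q\Vert_{W,\infty}=c$. The paper's entire proof consists of writing these down (for the Strange state, $Q=\proj{\SSS}+\tfrac15\proj{0}+\tfrac{1}{10}(\proj{1}+\proj{2}+\ketbra{1}{2}+\ketbra{2}{1})$ and $\sigma^\star=\tfrac35\SSS$; analogous explicit operators for $\NN$, $H_+$, and $T$, the last requiring a non-stabilizer positive-Wigner state). Your symmetry-reduction heuristic is a reasonable way to \emph{find} them, but until they are exhibited and verified the values $3/5$, $2/3$, $1/(3-\sqrt3)$, $1/(1+2\sin(\pi/18))$ are unproven. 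Second, your stated sufficient condition for block-diagonality --- that the Clifford stabilizer subgroup $G_\psi$ acts irreducibly on $\ket{\psi}^\perp$ --- does not hold as stated for $H_+$: the obvious symmetry is the Hadamard itself, which acts on $\ket{H_+}^\perp=\operatorname{span}\{\ket{H_-},\ket{H_i}\}$ with distinct eigenvalues $-1$ and $i$, hence reducibly. Block-diagonality still holds for the actual optimizer there, but it must be verified directly rather than deduced from irreducibility.
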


\begin{proof}
Our proof strategy for all four cases is the same. For each case, we first bound the min-thauma of the given state from below, and then we bound its max-thauma from above. In each case, we find that the lower and upper bounds match, allowing us to conclude the equalities. If the optimal state for the max-thauma is a stabilizer state, then one can conclude from definitions that max-thauma is an upper bound on the regularized relative entropy of magic, and so this is how we obtain equalities involving regularized relative entropy of magic in some cases. We remark here that we found these results by first numerically solving the SDPs for min- and max-thauma, and then determining an analytic form for the optimal values and solutions for the SDPs. We have made our Matlab and Mathematica source files available (see ancillary files on arXiv). We numerically solved the SDPs using Matlab and then we confirmed the analytic solutions using Mathematica.

\bigskip
\textbf{Strange state.} We begin with the Strange state $\SSS$, by bounding its min-thauma $\theta_{\min}( {\SSS})$ from below.
First, we choose a feasible solution to the dual SDP~\eqref{eq:Rmin min} as
\begin{align}
Q=\proj{\SSS}+\frac15\proj{0}+\frac{1}{10}(\proj 1+\proj 2+\ketbra12+\ketbra21).
\end{align}
As $\|Q\|_{W,\infty}=\frac3 5$, it holds that
\begin{align}
\theta_{\min}( {\SSS})\ge - \log_2 \|Q\|_{W,\infty}=\log_2 \frac 5  3.
\end{align}

We now bound the max-thauma $\theta_{\max}( {\SSS})$ from above by the same value. Consider that the sub-normalized  state $\sigma=\frac{3}{5}\SSS$ satisfies $\|\sigma\|_{W,1}=1$. This implies that
\begin{align}
\theta_{\max}({\SSS}) \leq \log_2 \Vert \sigma^{-1/2} \SSS \sigma^{-1/2} \Vert_\infty = \log_2 \frac 5 3.
\end{align}

By exploiting the above bounds and the inequalities $\theta_{\min}({\SSS})\le \theta^{\infty}({\SSS}) \leq \theta_{\max}({\SSS})$
from Proposition~\ref{prop: min theta add}, we conclude the equalities in \eqref{eq:thauma-Strange-state}.

\bigskip
\textbf{Norrell state.} We turn to the Norrell state $\NN$, and we bound its min-thauma from below. Let us take
\begin{equation}
Q=\proj {\NN}+\frac{1}{6}(\proj {0} - \ketbra 0 2 -\ketbra 2 0 + \proj {2})
\end{equation}
as a feasible solution to the SDP~\eqref{eq:Rmin min}. By direct calculation, we find that $\|Q\|_{W,\infty} = 2/3$, so that
\begin{align}\label{eq:N state lower}
\theta_{\min} ( {\NN}) \ge -\log_2 \|Q\|_{W,\infty}= \log_2 \frac{3}{2}.
\end{align}

We now bound the max-thauma of the Norrell state from above. Let us choose the following stabilizer state:
\begin{align}
\tau = \frac{1}{3} (\proj {u_0}+\proj {u_1}+\proj {u_2}) =
\frac{1}{9}\left(\begin{matrix}
2 & -1 & 2\\
-1 & 5 & -1 \\
2 & -1 & 2
\end{matrix}\right)
,
\end{align}
which is a convex mixture of the following stabilizer states:
\begin{align}
	\ket {u_0} & = \ket 1,\\
	\ket {u_1} & =  (\ket 0 +e^{2 \pi i/3}\ket 1 +\ket 2)/\sqrt{3},\\
	\ket {u_2} & =  (\ket 0 +e^{4 \pi i/3}\ket 1 +\ket 2)/\sqrt{3}.
\end{align}
The state $\ket {u_0} = X \ket 0$, $\ket {u_1} = SHX \ket 0$, and $\ket{u_2} = S^2 H X^2 \ket 0$. Thus, each state is a stabilizer state as claimed.
Then,
\begin{align}\label{eq:N state upper}
	\theta_{\max}(\NN) \leq D_{\max}(\proj \NN \| \tau) = \log_2 \frac{3}{2}.
\end{align}
The last equality follows by direct calculation.

By exploiting \eqref{eq:N state lower} and \eqref{eq:N state upper} and the inequalities $\theta_{\min}({\NN})\le \theta^{\infty}({\NN}) \leq \theta_{\max}({\NN})$
from Proposition~\ref{prop: min theta add}, we conclude the equalities
$\theta_{\min}({\NN})= \theta^{\infty}({\NN}) = \theta_{\max}({\NN}) =  \log_2 \frac{3}{2}$
in \eqref{eq:thauma-Norrell-state}. The inequality $\log_2 \frac{3}{2} = \theta^{\infty}(\NN) \leq R_{\cM}^{\infty}(\NN) $ follows by definition. The opposite inequality follows because $\tau$ is a stabilizer state, so that for all $n\geq 1$,
\begin{equation}
\frac{1}{n} R_{\cM}(\NN^{\otimes n}) \leq \frac{1}{n}D_{\max}((\proj \NN)^{\otimes n} \| \tau^{\otimes n}) = D_{\max}(\proj \NN \| \tau) = \log_2 \frac{3}{2},
\end{equation}
which implies that $R_{\cM}^{\infty}(\NN) \leq \log_2 \frac{3}{2}$.

\bigskip
\textbf{$H_+$ state.} We consider the $H_+$ state, and we bound its min-thauma from below. Let us take
\begin{equation}
Q=\proj {H_{+}}+\frac{1}{3+\sqrt 3}\proj {H_{i}}
\end{equation}
as a feasible solution to the SDP~\eqref{eq:Rmin min}. By direct calculation, we find that 
\begin{align}
\|Q\|_{W,\infty}= \frac{1}{3-\sqrt 3}. 
\end{align}
Thus, we have that
\begin{align}
\label{eq: R H lower}
\theta_{\min}(H_+) \ge - \log_2 \|Q\|_{W,\infty} = \log_2 (3-\sqrt 3).
\end{align}

We now bound the max-thauma of the $H_+$ state from above. Let us choose the stabilizer state 
\begin{align}
\sigma= \frac{1}{2} (\proj {v_0} + \proj{v_1})
\end{align}
 with
\begin{align}
\ket {v_0}=\ket 0, \quad \ket {v_1} = \frac{1}{\sqrt 3} (\ket 0+\ket 1 +\ket 2).
\end{align}
Then it is clear that
\begin{align}
\label{eq: R H upper}
\theta_{\max}(H_+) \leq  D_{\max}(\proj {H_{+}} \| \sigma) = \log_2 \bra{H_+}  \sigma^{-1} \ket{H_+}= \log_2 (3-\sqrt 3),
\end{align}
which can be confirmed by direct calculation, using the fact that
\begin{equation}
\sigma^{-1} =
\left(\begin{matrix}
2 &-1& -1\\
-1 & 2 & 2\\
-1 & 2 & 2 
\end{matrix}
\right).
\end{equation}

By exploiting \eqref{eq: R H lower} and \eqref{eq: R H upper} and the inequalities $\theta_{\min}(H_+)\le \theta^{\infty}(H_+) \leq \theta_{\max}(H_+)$
from Proposition~\ref{prop: min theta add}, we conclude the equalities
$\theta_{\min}(H_+)= \theta^{\infty}(H_+) = \theta_{\max}(H_+) = \log_2 (3-\sqrt 3)$
in \eqref{eq:thauma-H-state}.
The inequality $\log_2 (3 -\sqrt{3}) = \theta^{\infty}(H_+) \leq R_{\cM}^{\infty}(H_+) $ follows by definition. The opposite inequality follows because $\sigma$ is a stabilizer state, so that for all $n\geq 1$,
\begin{equation}
\frac{1}{n} R_{\cM}(H_+^{\otimes n}) \leq \frac{1}{n}D_{\max}((H_+)^{\otimes n} \| \sigma^{\otimes n}) = D_{\max}(H_+ \| \sigma) = \log_2 (3 - \sqrt{3}),
\end{equation}
which implies that $R_{\cM}^{\infty}(H_+) \leq \log_2 (3-\sqrt{3})$.

\bigskip
\textbf{$T$ state.} We consider the $T$ state, and we bound its min-thauma from below. Let us take
\begin{equation}
Q = \proj{T} + 2\sin(\pi/18) \proj{\phi}
\end{equation}
where
\begin{equation}
\ket{\phi} = (e^{-2\pi i/9}\ket 0 + e^{8 \pi i / 9}\ket 1 + \ket 2)/\sqrt{3}.
\end{equation}
Then one finds that $(\Vert Q \Vert_{W,\infty})^{-1} = 1+2\sin(\pi/18)$, so that
\begin{align}
\theta_{\min}(T) \ge - \log_2 \|Q\|_{W,\infty} = \log_2 (1+2\sin(\pi/18)).
\label{eq: R T lower}
\end{align} 

We now bound the max-thauma of the $T$ state from above. Let us choose the following state $\tau$ with positive Wigner function:
\begin{align}
\tau=\left(\begin{matrix}
1/3 &   r_1 & r_2 \\
r_1^* & 1/3 & r_1\\
r_2^* & r_1^* & 1/3 
\end{matrix}
\right),
\end{align}
with
\begin{align}
r_1  =\frac{e^{\pi i / 9}}{6 \cos(2 \pi / 9)},\qquad \qquad
r_2  =\frac{e^{5 \pi i / 9}}{6 \cos(2 \pi / 9)}.
\end{align}
The state $\sigma$ has eigenvalues equal to $0$, $\frac{2}{\sqrt{3}}\sin(2 \pi / 9)\approx 0.742227$, and $\frac{2\sin(\pi/9)\sin(2 \pi /9)}{\sqrt{3}\cos(\pi/18)} \approx 0.257773 $, and it is thus positive semi-definite as claimed.
A calculation gives that the discrete Wigner function of the above state is described by the following matrix:
\begin{equation}
\left(\begin{matrix}
s & 1/3 -s & s\\
0 & s & 0 \\
1/3 - s & 0 & 1/3 -s
\end{matrix}\right)
\end{equation}
where
\begin{equation}
s = \frac{1}{9} \left( 2 +  \frac{\sin(\pi/18)}{ \cos(2 \pi /9)}\right) \approx 0.247409,
\end{equation}
confirming that it has only non-negative entries as claimed.
Therefore,
\begin{align}
\theta_{\max}(T) \le D_{\max}(\proj{T}\|\tau) \le \log_2(1+2\sin(\pi/18)).
\label{eq: R T upper}
\end{align}
The last inequality can be verified from the fact that the operator $(1+2\sin(\pi/18))\tau - \proj{T}$ is positive semi-definite, having two zero eigenvalues and one non-zero eigenvalue equal to $\frac{\sin(\pi/9)}{\cos(\pi/18)}\approx 0.347296$.

By exploiting \eqref{eq: R T lower} and \eqref{eq: R T upper} and the inequalities $\theta_{\min}(T)\le \theta^{\infty}(T) \leq \theta_{\max}(T)$
from Proposition~\ref{prop: min theta add}, we conclude the equalities in \eqref{eq:thauma-T-state}.
\end{proof}

\endgroup

\section{Proof of Proposition~\ref{th:hypothesis bound magic}}

 \begingroup
 \renewcommand{\theproposition}{\ref{th:hypothesis bound magic}}

\begin{proposition}
Given a state $\rho$, the following holds
\begin{align}
\cM_{H_+}^\ve(\rho) & \le \frac{\min_{\sigma\in \cW}D_H^\ve(\rho\|\sigma)}{\log_2(3-\sqrt3)},
\\
\cM_{T}^\ve(\rho) & \le \frac{\min_{\sigma\in \cW}D_H^\ve(\rho\|\sigma)}{\log_2(1+2\sin(\pi/18))}.
\end{align}
\end{proposition}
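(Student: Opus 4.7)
The plan is to adapt the Rains-bound strategy from entanglement distillation to the stabilizer resource theory, using $\cW$ in place of the PPT cone and the min-thauma in place of the logarithmic negativity in the error exponent. The heart of the argument is a meta-converse: any achievable magic distillation protocol furnishes a good hypothesis test between the input state $\rho$ and an arbitrary competitor $\sigma\in\cW$.

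First, I would fix $k=\cM_{H_+}^\ve(\rho)$ and a witnessing trace-preserving stabilizer operation $\Lambda$, so that $\bra{H_+}^{\ox k}\Lambda(\rho)\ket{H_+}^{\ox k}\ge 1-\ve$. Next, for any $\sigma\in\cW$, I would invoke contractivity of the Wigner $1$-norm under stabilizer operations---equivalently, the fact that such operations admit a positive conditional probability representation on the discrete phase space, as used in the proof of Theorem~4 of \cite{Veitch2014} and re-used in Appendix~E---to obtain $\|\Lambda(\sigma)\|_{W,1}\le\|\sigma\|_{W,1}\le 1$, hence $\cM(\Lambda(\sigma))\le 0$. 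Together with $\Lambda(\sigma)\ge 0$ this gives $\Lambda(\sigma)\in\cW$, i.e.\ the set $\cW$ is preserved by stabilizer channels.

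Then I would use $M=\proj{H_+}^{\ox k}$ as a candidate POVM element in the hypothesis testing problem between $\Lambda(\rho)$ and $\Lambda(\sigma)$. The type-I error satisfies $1-\tr[M\Lambda(\rho)]\le\ve$ by construction, and the type-II error is controlled by the overlap bound recorded just after Proposition~\ref{prop: min theta add}, which is an immediate consequence of the additivity of $\theta_{\min}$: for every $\tau\in\cW$,
\begin{equation}
\tr[\proj{H_+}^{\ox k}\tau]\le 2^{-k\theta_{\min}(H_+)}.
\end{equation}
Therefore $D_H^\ve(\Lambda(\rho)\|\Lambda(\sigma))\ge k\,\theta_{\min}(H_+)$, and data processing of the hypothesis testing relative entropy under the channel $\Lambda$ yields $D_H^\ve(\rho\|\sigma)\ge k\,\theta_{\min}(H_+)$. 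Substituting the exact value $\theta_{\min}(H_+)=\log_2(3-\sqrt 3)$ from Proposition~\ref{prop: thauma collapse magic states} and minimizing the left-hand side over $\sigma\in\cW$ produces the first claimed inequality. The $T$-state bound follows by the identical argument with $\theta_{\min}(T)=\log_2(1+2\sin(\pi/18))$ in place of $\theta_{\min}(H_+)$.

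The main obstacle is verifying the invariance $\Lambda(\cW)\subseteq\cW$; everything else is bookkeeping once that monotonicity of mana is in hand. A minor secondary point is that the definition of $\cM_{H_+}^\ve$ used here requires $\Lambda(\rho)$ to approximate a normalized state, so $\Lambda$ may be taken trace preserving and no renormalization of a probabilistic branch is needed.
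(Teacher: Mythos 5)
Your proof is correct and follows essentially the same route as the paper's: use $M=\proj{H_+}^{\ox k}$ as the hypothesis test, bound the type-II error via the additivity-derived overlap bound $\tr[\proj{H_+}^{\ox k}\tau]\le 2^{-k\theta_{\min}(H_+)}$ for $\tau\in\cW$, and transfer to the input via data processing of $D_H^\ve$. The only difference is that you make explicit the step $\Lambda(\cW)\subseteq\cW$ (via monotonicity of mana), which the paper's proof uses implicitly when it applies the overlap bound to $\Lambda(\sigma)$.
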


\begin{proof}
   Let us suppose that the stabilizer operation $\Lambda$ satisfies
$   \Lambda(\rho) \approx_\ve \proj {{H_+}}^{\ox k}$ for some $k$. 
Recall that for any   $\tau \in \cW $, we have that
$
    \tr [ \proj {\phi}^{\ox n} \tau ] \leq 2^{-n\theta_{\min}(\phi)}
$.
   Given $\sigma \in \cW$, and using the above fact, we have 
  \begin{align}
  D_0(\proj {{H_+}} ^{\ox k}\|\Lambda(\sigma)) 
  & = -\log_2 \tr \proj {{H_+}}^{\ox k}\Lambda(\sigma) \\
  & \ge -\log_2 (3-\sqrt 3)^{-k}=k\log_2(3-\sqrt 3).
  \end{align}
   Therefore, we have that
  \begin{align}
    k\log_2(3-\sqrt3) & \le D_0(\proj {{H_+}}^{\ox k} \|\Lambda(\sigma))  \\
    & \leq D_H^\ve(\Lambda(\rho)\|\Lambda(\sigma))\\
    & \leq D_H^\ve(\rho\|\sigma).
  \end{align}
  The second inequality follows from the definition of $D_H^\ve$. The third inequality follows from the data processing inequality of $D_H^\ve$.
  
Finally, minimizing over all $\sigma\in\cW$, we find that
  \begin{align}
  k\le  {\min_{\sigma\in \cW}D_H^\ve(\rho\|\sigma)}/{\log_2(3-\sqrt3)}.
  \end{align}
  Since $k$ is arbitrary, we arrive at the bound in \eqref{eq:H-magic-bnd}. The proof of \eqref{eq:T-magic-bnd} is similar.
  \end{proof}
\endgroup
 \section{Second-order converse}
 \label{app: second order}
 \begingroup
 \renewcommand{\theproposition}{\ref{coro:second order theorem}} 
 \begin{corollary} 
For a quantum state $\rho$, number $n$ of available source states, and error tolerance $\ve \in (0,1)$, the following inequality holds
\begin{align}
\cM_{H_+}^\ve(\rho^{\ox n})  \le \frac{n\theta(\rho)+\sqrt{nV(\rho\|\sigma)}\Phi^{-1}(\varepsilon)+ O(\log n)}{\log_2(3-\sqrt3)},
\end{align}
where $\sigma$ is a minimizer of $\theta(\rho)$, $\Phi^{-1}$ is the inverse of the cumulative normal distribution function, and $V(\rho\|\sigma)= \tr \rho (\log_2 \rho - \log_2 \sigma)^2 - D(\rho\|\sigma)^2$.
Similarly,
\begin{align}
\cM_{T}^\ve(\rho^{\ox n}) & \le \frac{n\theta(\rho)+\sqrt{nV(\rho\|\sigma)}\Phi^{-1}(\varepsilon)+ O(\log n)}{\log_2(1+2\sin(\pi/18))}.
\end{align}
\end{corollary}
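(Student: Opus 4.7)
The plan is to combine Proposition~\ref{th:hypothesis bound magic} with the second-order (Gaussian) expansion of the hypothesis testing relative entropy in the i.i.d.\ regime, due to Tomamichel--Hayashi \cite{Tomamichel2013a} and Li \cite{Li2014a}. First I would apply Proposition~\ref{th:hypothesis bound magic} directly to the $n$-fold tensor power $\rho^{\ox n}$, which yields
\begin{equation}
\cM_{H_+}^\ve(\rho^{\ox n}) \le \frac{\min_{\omega \in \cW_n} D_H^\ve(\rho^{\ox n}\|\omega)}{\log_2(3-\sqrt 3)},
\notag
\end{equation}
where $\cW_n$ denotes the set of sub-normalized states with non-positive mana on the $n$-fold system.

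Next I would upper bound the minimum over $\cW_n$ by restricting to product operators of the form $\sigma^{\ox n}$, where $\sigma \in \cW$ is an optimizer attaining $\theta(\rho) = D(\rho\|\sigma)$. The key observation is that $\sigma^{\ox n}$ is indeed a feasible point of $\cW_n$: additivity of the Wigner trace norm under tensor products implies $\cM(\sigma^{\ox n}) = n\cM(\sigma) \le 0$, so $\sigma^{\ox n} \in \cW_n$. This restriction yields
\begin{equation}
\min_{\omega \in \cW_n} D_H^\ve(\rho^{\ox n}\|\omega) \le D_H^\ve(\rho^{\ox n}\|\sigma^{\ox n}).
\notag
\end{equation}

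Finally, I would invoke the second-order asymptotic expansion of the hypothesis testing relative entropy for i.i.d.\ states \cite{Tomamichel2013a,Li2014a}, which gives
\begin{equation}
D_H^\ve(\rho^{\ox n}\|\sigma^{\ox n}) = nD(\rho\|\sigma) + \sqrt{nV(\rho\|\sigma)}\,\Phi^{-1}(\ve) + O(\log n),
\notag
\end{equation}
and substitute $D(\rho\|\sigma) = \theta(\rho)$ by the choice of $\sigma$. Chaining these three inequalities together produces the claimed bound for the $H_+$ case, and the bound for the $T$ state is obtained identically, differing only in replacing $\log_2(3-\sqrt 3)$ with $\log_2(1+2\sin(\pi/18))$ via the corresponding clause of Proposition~\ref{th:hypothesis bound magic}. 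There is no real obstacle here beyond ensuring that the minimizer $\sigma$ of $\theta(\rho)$ exists and that product operators remain feasible in $\cW_n$, both of which follow from the structure of $\cW$ and the additivity of mana; the rest is a direct combination of the one-shot converse already established and known second-order quantum Stein-type asymptotics.
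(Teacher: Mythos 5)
Your proposal is correct and follows essentially the same route as the paper: fix a minimizer $\sigma$ of $\theta(\rho)$, apply Proposition~\ref{th:hypothesis bound magic} to $\rho^{\ox n}$ with the feasible choice $\sigma^{\ox n}$, and invoke the second-order expansion of $D_H^\ve$ from \cite{Tomamichel2013a,Li2014a}. Your explicit check that $\sigma^{\ox n}\in\cW$ via multiplicativity of the Wigner trace norm is a detail the paper leaves implicit, but the argument is the same.
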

\begin{proof}
Suppose $\sigma$ is any minimizer of the thauma $\theta(\rho)$. For given $n$ and $\ve$, 
     \begin{align}
        \cM_{H_+}(\rho^{\ox n}) &\le D_H^\ve(\rho^{\ox n}\big\| \sigma^{\ox n}) \notag\\
      & = n D(\rho\big\| \sigma) + \sqrt{nV(\rho\big\| \sigma)} \ \Phi^{-1}(\ve) + O(\log n).
    \end{align}
    The first line follows from the upper bound in Proposition~\ref{th:hypothesis bound magic}, while the second line uses the second-order expansion of quantum hypothesis testing relative entropy \cite{Tomamichel2013a,Li2014a}. 
 \end{proof}
 \endgroup

\section{Bound on efficiency of magic-state distillation}

\begingroup
\renewcommand{\theproposition}{\ref{th:eff bound}}
\begin{proposition}
The efficiency of distilling a magic state $\sigma$ from independent copies of the resource state $\rho$ is bounded from below by
\begin{align}
N_{\theta_{\max}}(\rho,\sigma):=\theta_{\max}(\sigma)/\theta_{\max}(\rho).
\end{align}
\end{proposition}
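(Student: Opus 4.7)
The plan is to leverage two properties of $\theta_{\max}$ established earlier in the excerpt: its additivity under tensor products (Proposition~\ref{prop: min theta add}) and its monotonicity on average under stabilizer operations (Proposition~\ref{prop:max-thauma-additive-monotone}). Together, these two ingredients straightforwardly yield the efficiency bound, and no further technical machinery should be needed.

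First I would unpack the definition of $N_{\text{eff}}(\rho \to \sigma)$: it is the infimum of $n/p$ over all triples $(n,p,\Lambda)$ such that $\Lambda \in \text{SO}$ acts on $\rho^{\otimes n}$ and succeeds in producing $\sigma$ with probability $p$. Fix any such triple. A probabilistic stabilizer protocol realizing this transformation can be described by a collection $\{\cN_j\}_j$ of completely positive stabilizer maps summing to a trace-preserving map, where one outcome (say $j=0$) corresponds to success, so that $\cN_0(\rho^{\otimes n})/p = \sigma$ with $p = \tr \cN_0(\rho^{\otimes n})$.

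Next I would apply Proposition~\ref{prop:max-thauma-additive-monotone} to this protocol acting on the input state $\rho^{\otimes n}$, giving
\begin{equation}
\theta_{\max}(\rho^{\otimes n}) \;\geq\; \sum_j p_j\, \theta_{\max}(\sigma_j) \;\geq\; p\, \theta_{\max}(\sigma),
\notag
\end{equation}
where the last inequality discards all non-success branches, which is justified because $\theta_{\max}$ is a non-stabilizerness monotone and therefore non-negative on every state. Using the additivity equality $\theta_{\max}(\rho^{\otimes n}) = n\, \theta_{\max}(\rho)$ from Proposition~\ref{prop: min theta add}, this rearranges to
\begin{equation}
\frac{n}{p} \;\geq\; \frac{\theta_{\max}(\sigma)}{\theta_{\max}(\rho)},
\notag
\end{equation}
provided $\theta_{\max}(\rho) > 0$ (i.e., $\rho$ is not useless as a magic resource; otherwise the bound is trivial with the usual convention). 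Taking the infimum over all admissible $(n, p, \Lambda)$ yields the claimed lower bound on $N_{\text{eff}}(\rho \to \sigma)$.

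There is no serious obstacle here; the only subtle point is ensuring that the ``on-average'' monotonicity of $\theta_{\max}$ is strong enough to handle probabilistic stabilizer protocols, which is exactly what Proposition~\ref{prop:max-thauma-additive-monotone} guarantees. If one wanted a slightly different formulation (e.g., the target $\sigma$ is actually approximated rather than produced exactly), one could instead invoke the full monotonicity together with a continuity bound for $\theta_{\max}$, but under the exact-transformation definition of $N_{\text{eff}}$ given in the main text, the argument above is complete.
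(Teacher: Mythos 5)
Your proposal is correct and follows essentially the same route as the paper's proof: both combine the additivity of $\theta_{\max}$ under tensor products with its monotonicity on average under probabilistic stabilizer protocols, then drop the non-success branches using non-negativity of $\theta_{\max}$ to obtain $n\,\theta_{\max}(\rho) \ge p\,\theta_{\max}(\sigma)$. The only cosmetic difference is that the paper encodes the success/failure branches with an explicit classical flag register $\proj{0}\ox\sigma + \proj{1}\ox\tau$, whereas you work directly with the branch decomposition $\{\cN_j\}_j$.
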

\begin{proof}
The proof is in the same spirit of the mana bound in~\cite{Veitch2014}. Let us suppose that the stabilizer protocol $\Lambda$ outputs $\sigma$ from $\rho^{\ox n}$ with probability $p$. Without loss of generality, we can assume that
\begin{align}
\Lambda(\rho^{\ox n})=p\proj 0\ox \sigma+(1-p)\proj 1 \ox \tau,
\end{align}
where $\tau$ is some state.
Applying Proposition~\ref{prop:max-thauma-additive-monotone} and the fact that $\ket 0$ is a stabilizer state, we find that
\begin{align}
n\theta_{\max}(\rho)&=\theta_{\max}(\rho^{\ox n}) \\
                  &\ge p\theta_{\max}(\proj 0\ox \sigma)\\
                  & = p\theta_{\max}(\sigma),
\end{align}
which means that the following bound holds for all integer $n \geq 1$ and $p\in(0,1]$:
\begin{align}
\frac{n}{p}\ge  \frac{\theta_{\max}(\sigma)}{\theta_{\max}(\rho)}.
\end{align}
This concludes the proof.
\end{proof}
\endgroup

\end{document}